\documentclass[11pt]{article}

\usepackage{fullpage}
\usepackage[utf8]{inputenc}

\usepackage{algorithm}
\usepackage{algorithmic}

\usepackage{tikz}
\usetikzlibrary{automata}
\tikzset{
		>=latex,
		blk/.style={state,fill=black},
		wht/.style={state,fill=white},
		node distance=0.5\columnwidth
}

\usepackage{amsmath}
\usepackage{amsfonts}
\usepackage{amssymb}
\usepackage{graphicx}
\usepackage{float}
\usepackage{amsthm}
\usepackage{enumitem}

\newtheorem{Lemma}{Lemma}
\newtheorem{Corollary}{Corollary}
\newtheorem{Theorem}{Theorem}
\newtheorem{Definition}{Definition}

\title{Optimally Gathering Two Robots}
\author{Adam Heriban$^\star$ \and Xavier D\'{e}fago$^\dag$ \and S\'{e}bastien Tixeuil$^\star$}
\date{$^\star$UPMC Sorbonne Universit\'{e}s, France\\
$^\dag$Tokyo Institute of Technology, Japan}

\bibliographystyle{plain}

\begin{document}

\maketitle

\begin{abstract}
We present an algorithm that ensures in finite time the gathering of two robots in the non-rigid ASYNC model. To circumvent established impossibility results, we assume robots are equipped with 2-colors lights and are able to measure distances between one another. Aside from its light, a robot has no memory of its past actions, and its protocol is deterministic. Since, in the same model, gathering is impossible when lights have a single color, our solution is optimal with respect to the number of used colors. 
\end{abstract}

\section{Introduction}

Networks of mobile robots evolving in a 2-dimensional Euclidean space recently captured the attention of the distributed computing community, as they promise new applications (rescue, exploration, surveillance) in potentially dangerous (and harmful) environments. Since its initial presentation~\cite{suzuki99}, this computing model has grown in popularity and many refinements have been proposed (see~\cite{FPS12b} for a recent state of the art). From a theoretical point of view, the interest lies in characterizing the exact conditions for solving a particular task. 

In the model we consider, robots are anonymous (\emph{i.e.}, indistinguishable from each-other), oblivious (\emph{i.e.}, no persistent memory of the past is available), and disoriented (\emph{i.e.}, they do not agree on a common coordinate
system). The robots operate in Look-Compute-Move cycles. In each cycle a robot ``Looks'' at its surroundings and obtains (in its own coordinate system) a snapshot containing the locations of all robots. Based on this visual information, the robot ``Computes'' a destination location (still in its own coordinate system) and then ``Moves'' towards the computed
location. Since the robots are identical, they all follow the same deterministic algorithm. The algorithm is oblivious if
the computed destination in each cycle depends only on the snapshot obtained in the current cycle (and not on the past
history of execution). The snapshots obtained by the robots are not consistently oriented in any manner (that is, the robots' local coordinate systems do not share a common direction nor a common chirality%
\footnote{Chirality denotes the ability to distinguish left from right.}%
).

The execution model significantly impacts the ability to solve collaborative tasks. Three different levels of synchronization have been considered. The strongest model~\cite{suzuki99} is the fully synchronized (FSYNC) model where each phase of each cycle is performed simultaneously by all robots. The semi-synchronous (SSYNC) model~\cite{suzuki99} considers that time is discretized into rounds, and that in each round an arbitrary yet non-empty subset of the robots are active. The robots that are active in a particular round perform exactly one atomic Look-Compute-Move cycle in that round. It is assumed that the scheduler (seen as an adversary) is fair in the sense that in each execution, every robot is activated
infinitely often. The weakest model is the asynchronous model~\cite{FPS12b} (ASYNC), which allows arbitrary delays between the Look, Compute and Move phases, and the movement itself may take an arbitrary amount of time. In this paper, we consider the most general ASYNC model. 

\paragraph{Related Work.}

The gathering problem is one of the benchmarking tasks in mobile robot networks, and has received a considerable amount of
attention (see~\cite{FPS12b} and references herein). The gathering tasks consist in all robots (each considered as a dimensionless point in a 2-dimensional Euclidean space) reaching a single point, not known beforehand, in finite time. 
A foundational result~\cite{suzuki99,CRTU15j} shows that in the SSYNC model, no deterministic algorithm can solve gathering for two robots without additional assumptions. This impossibility result naturally extends to the ASYNC model. 

In hostile environments such as those we envision, robots are likely to fail. So far, three kinds of failures were considered in the context of deterministic gathering~\cite{AP06j,BDT13c,DP12j,BCM15c,DGC+15,DGM+06}:
\begin{enumerate}
\item \emph{Transient faults}: robots may experience transient faults that corrupt their current state, leading to an arbitrary initial configuration, from which algorithms may not recover (\emph{e.g.}, if the initial configuration is bivalent, and the execution model is SSYNC). As a result, the only known self-stabilizing solution~\cite{DP12j} (that is, able to recover from \emph{any} initial configuration) in the classical fair model considers only odd sets of robots (so, a bivalent configuration cannot exist in this setting). Other self-stabilizing algorithms impose stricter assumptions on the scheduler \cite{DGC+15,DGM+06}.
\item \emph{Crash faults}: when robots may stop executing their algorithm unexpectedly (and correct robots are not able to
  distinguish a correct robot from a crashed one at first sight), guaranteeing that correct robots still gather in finite time is a challenge. All proposed solutions so far~\cite{AP06j,BDT13c,BCM15c,BT15,DGC+15,DGM+06} are restricted to the SSYNC model.
\item \emph{Byzantine faults}: when robots may have completely arbitrary
(and possibly malicious) behavior, there exists no deterministic gathering protocol in the SSYNC model even assuming that at most one robot may be Byzantine~\cite{AP06j}. This impossibility extends to the ASYNC case.  
\end{enumerate} 

To circumvent the aforementioned impossibility results, it was proposed to endow each robot with a \emph{light}~\cite{DAS2016171}, that is, it is capable of emitting a fixed number of colors visible to all other robots. This additional capacity allows to solve gathering of two robots in the most general ASYNC model, provided that robots lights are capable to emit at least \emph{four} colors. In the more restricted SSYNC model, Viglietta~\cite{Viglietta2014} proved that being able to emit two colors is sufficient to solve the gathering problem. In the same paper~\cite{Viglietta2014}, he also proved that an algorithm that only makes use of observed colors to decide on its next move cannot gather two robots in the ASYNC model using only two colors. Finally, Viglietta proves~\cite{Viglietta2014} that three colors and the ability to detect null distances is sufficient in ASYNC. Both solutions in ASYNC~\cite{DAS2016171,Viglietta2014} and SSYNC~\cite{Viglietta2014} output a correct behavior independently of the initial value of the lights' colors. Recently, Okumura \emph{et al.}~\cite{OWK17r} presented an algorithm with two colors that gathers robots in ASYNC assuming \emph{rigid} moves (that is, the move of every robot is never stopped by the scheduler before completion), or assuming non-rigid moves but robots are aware of $\delta$ (the minimum distance before which the scheduler cannot interrupt their move). Also, the solution of Okumura \emph{et al.}~\cite{OWK17r} requires lights to have a specific color in the initial configuration. 

The remaining open case is the feasibility of gathering with only two colors in the most general ASYNC model, without additional assumptions.

\paragraph{Our Contribution.}

\begin{table}
\newcommand\ASYNC{\textbf{ASYNC}}
\newcommand\SSYNC{\emph{SSYNC}}
\newcommand\YES{\emph{yes}}
\newcommand\No{\textbf{NO}}
\centering\small
\resizebox{\linewidth}{!}{
\begin{tabular}{|c|l|l|l|l|l|}\hline
\textbf{Reference}  & \textbf{Synchrony} & \textbf{Rigid} &
\textbf{Initial Color} & \textbf{$\delta$ known} & \textbf{\# Colors}\\
\hline
\cite{DAS2016171}   & \ASYNC    & \No   & \No   & \No   & \emph{4} \\\hline
\cite{Viglietta2014}& \SSYNC    & \No   & \No   & \No   & \textbf{2 (opt.)} \\\hline
\cite{Viglietta2014}& \ASYNC    & \No   & \No   & \No   & \emph{3} \\\hline
\cite{OWK17r}       & \ASYNC    & \YES  & \YES  & \No   & \textbf{2 (opt.)} \\\hline
\cite{OWK17r}       & \ASYNC    & \No   & \YES  & \YES  & \textbf{2 (opt.)} \\\hline
\textbf{This paper} & \ASYNC    & \No   & \No   & \No   & \textbf{2 (opt.)} \\\hline 
\end{tabular}
}
\caption{Gathering robots with lights}
\label{tab:related}

\end{table}

We present a solution to the gathering of two robots that is optimal with respect to all considered criteria: it uses the minimum number of colors (compared to the $4$ colors required by the algorithm of Das \emph{et al.}~\cite{DAS2016171} and the $3$ colors of the algorithm of Viglietta~\cite{Viglietta2014} that operates in the same system ASYNC setting), it performs in the most general ASYNC model (unlike the SSYNC algorithm of Viglietta~\cite{Viglietta2014}), while still not requiring additional assumptions such as rigid moves, initial colors, or knowledge about $\delta$ (unlike the algorithms of Okumura \emph{et al.}~\cite{OWK17r}). Characteristics of our solution and comparison with previous work are summarized in Table~\ref{tab:related}, where boldface characteristics are optimal.  

\section{Model}

We consider the ASYNC model of robots with lights introduced in~\cite{DAS2016171}. In more details, we assume a system of two robots endowed with colored lights (with two possible colors, Black and White) that are modeled as mobile colored points in the Euclidean plane $\mathbb{R}^2$. 

Both robots execute cycles that consist of three phases : LOOK, COMPUTE and MOVE. When a robot is not executing a LOOK-COMPUTE-MOVE (LCM) cycle, it is considered to be in a WAIT phase. Each phase can be described as follows:
\begin{itemize}
\item WAIT: The robot is idle and waiting for activation.
\item LOOK: The robot is taking a snapshot of the position of the other robot, as well as the colors of both robots. This phase is assumed instantaneous.
\item COMPUTE: The robot computes its next destination using the snapshot. A robot is able to change the color of its own light at the end of its COMPUTE phase. 
\item MOVE: The robot moves towards its destination.
\end{itemize}
The duration of the COMPUTE and MOVE phases, and the delay between two phases, are chosen by an adversary and can be arbitrary long, but finite. The adversary decides when robots are activated assuming a \emph{fair} scheduling \emph{i.e.}, in any configuration, all robots are activated within finite time. The adversary also controls the robots movement along their target path and can stop a robot before reaching its destination, but not before traveling at least a distance $\delta > 0$ ($\delta$ being unknown to the robots).
In other words, if a robot has a target at a distance $x$, we assume that, at the end of its MOVE phase, the robot has moved a distance in the interval $[min(\delta,x),x]$. The exact position reached is determined by the adversary scheduler.

Robots are anonymous, meaning they are indistinguishable and they execute the exact same algorithm. However, for the sake of practicality, they are called $A$ and $B$ in the sequel. Each robot has a local coordinate system about which we make no assumptions, in particular, $A$ and $B$ may have distinct North, chirality, and unit distance. We also assume that, except from their light color, robots have no mean of explicitly communicating with each other.

The robots are also oblivious, in that they do not remember their past actions. This implies that the COMPUTE phase can have no other input than the snapshot from the last LOOK phase.

\section{Our Algorithm}

\subsection{Previous Results}

Viglietta observes~\cite{Viglietta2014} that, in order to solve the gathering problem in SSYNC, an algorithm must accomplish two things: 

\begin{itemize}
\item In case robots are synchronized, they need to move towards the midpoint.
\item In case robots are activated alternatively, one needs to move towards the other. In that case, the other robot must not move.
\end{itemize}

In ASYNC, Viglietta also shows~\cite{Viglietta2014} that no algorithm using only two colors can solve gathering if the destination computation solely relies on this form of calculation:

\[
me.destination = (1-\lambda ) \cdot me.position + \lambda \cdot other.position
\]

With:

\[
\lambda = f(me.color, other.color)
\]

Where $f$ is a function (that is, it associates to a 2-tuple a single image).

It is similarly assumed that the next color of a robot only depends on the current colors of the two robots and not on the distance between the robots.

Algorithms that follow these rules of computation are called class $\mathcal{L}$ algorithms. Then, the only algorithm of class~$\mathcal{L}$ that satisfies these criteria is presented in Figure~\ref{fig:viglietta}.

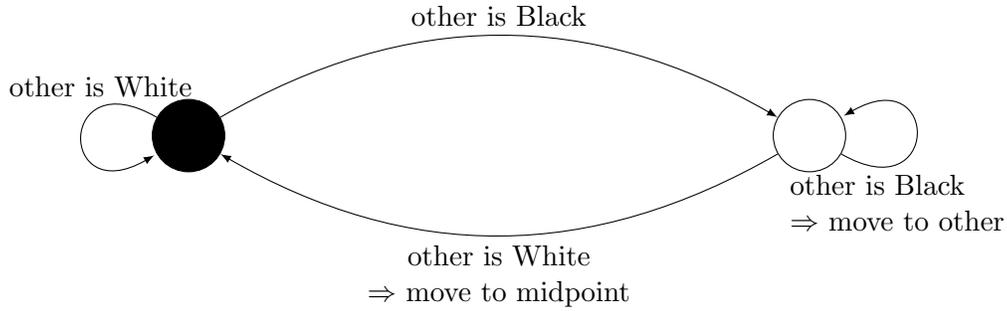
\begin{figure}[htbp]
	\centering
	\begin{tikzpicture}
		\node[blk] (B) {};
		\node[wht] (W) [right of=B] {};	
		\path[->] (B) edge[bend left] node[above,align=center]{other is Black} (W);
		\path[->] (W) edge[bend left] node[below,align=center]{other is White\\$\Rightarrow$ move to midpoint} (B);
		\path[->] (W) edge[out=-30,in=30,loop] node[near start,below,align=left]{other is Black\\$\Rightarrow$ move to other} (W);
		\path[->] (B) edge[out=150,in=210,loop] node[near start,above,align=right]{other is White} (B);
	\end{tikzpicture}
	\caption{Viglietta's~\cite{Viglietta2014} Algorithm}
	\label{fig:viglietta}
\end{figure}

Now, there exists an execution of this algorithm that does not solve ASYNC gathering (see Lemma 4.9 in Viglietta's paper~\cite{Viglietta2014}) when both robots start in the Black color:

\begin{enumerate}
\item Let both robots perform a Look phase, so that both will turn White.
\item Let robot $A$ finish the current cycle and perform a new Look, while the $B$ waits. Hence, $A$ remains White and moves to $B$’s position. Now, we
let $B$ finish the current cycle and perform a new Look. So, $B$ turns Black and moves to the midpoint $m$ between $A$ and $B$. 
\item Let $A$ finish the current cycle, thus reaching $B$, and perform a whole new cycle, thus turning Black. 
\item Finally, let $B$ finish the current cycle, thus turning Black and moving to $m$. 
\end{enumerate}

As a result, both robots are again set to Black, are in a Wait phase, both have executed at least one cycle, and their distance has halved. Thus, by repeating the same pattern of moves, they approach one another but never gather.

Because of this execution, it is not possible to solve gathering with two colors with an $\mathcal{L}$ class algorithm.

As a result, we do not design our algorithm to be of class $\mathcal{L}$, as our computation of the next color not only depends on the respective colors of the two robots, but also on their respective distance. 

\subsection{Our Algorithm}

We observe that in the problematic aforementioned execution, there is an instant when both robots are actually gathered, but are later separated because of pending moves. 

We thus introduce a behavior change in the White state of Viglietta's~\cite{Viglietta2014} algorithm to obtain our proposal, presented in Figure~\ref{fig:ouralgorithm} and Algorithm~\ref{alg1}.

\begin{figure}[htpb]
	\centering
	\begin{tikzpicture}
		\node[blk] (B) {};
		\node[wht] (W) [right of=B] {};	
		\path[->] (B) edge[bend left] node[above,align=center]{other is Black} (W);
		\path[->] (W) edge[bend left] node[below,align=left]{other is White $\wedge$ $\neg$Gathered:\\$\Rightarrow$ move to midpoint} (B);
		\path[->] (W) edge[out=75,in=15,loop] node[near start,above,align=left]{other is Black:\\$\Rightarrow$ move to other} (W);
		\path[->] (W) edge[out=-75,in=-15,loop] node[below,align=left]{Gathered\\$\Rightarrow$ do nothing} (W);
		\path[->] (B) edge[out=150,in=210,loop] node[near start,above,align=right]{other is White} (B);
	\end{tikzpicture}
	\caption{Our Algorithm}
	\label{fig:ouralgorithm}
\end{figure}
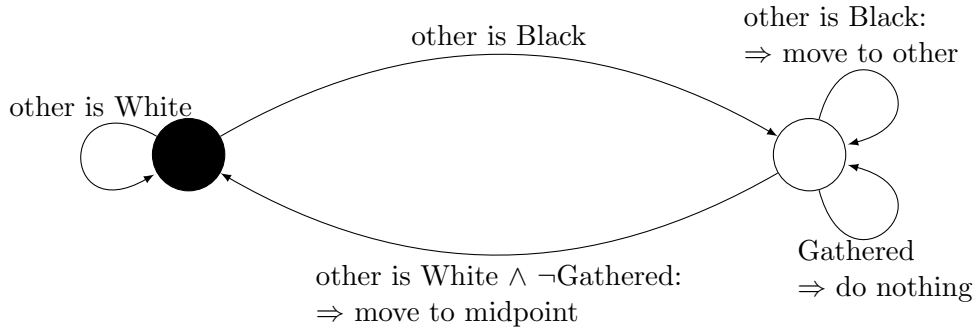

\begin{algorithm}
\caption{ASYNC Robot gathering with two colors}         
\label{alg1}         
\begin{algorithmic}                   
    \STATE 
    \IF{me.color = White}
    	\IF{(me.position = other.position)}
    		\STATE do nothing
		\ELSIF{other.color = White}
        	\STATE me.destination $\Leftarrow$ other.position/2
        	\STATE me.color $\Leftarrow$ Black
    	\ELSIF{other.color = Black}
        	\STATE me.destination $\Leftarrow$ other.position
		\ENDIF
    \ELSIF{me.color = Black \AND other.color = Black}
        \STATE me.color $\Leftarrow$ White
    \ENDIF
\end{algorithmic}
\end{algorithm}

Our proposal breaks the infinite loop in the problematic execution, as it prevents robot $A$ from switching to color Black after reaching $B$ and forces it to remain White. This implies that activating $B$ afterwards actually separates the robots into different colors, and prevents them from going back to both being Black.

Let us observe that our new algorithm no longer belongs to class $\mathcal{L}$, since the same observed $2$-tuple of colors may yield different outcomes depending on the distance between $A$ and $B$. In particular, when both robots are observed White, the next color depends on whether the two robots are gathered. So, the assumption of Viglietta~\cite{Viglietta2014} that a new color is solely determined by the current colors no longer holds.%
\footnote{It is worth noting that, while the definition of class~$\mathcal{L}$ does not explicitly mention that the new color is also obtained as a function of the two observed colors, the Lemma~4.4 of Viglietta's paper \cite{Viglietta2014} entirely relies on this implicit fact, and so does the 3-color algorithm for the ASYNC model.
}

We now need to prove that this new algorithm actually solves the gathering problem in ASYNC in a self-stabilizing manner. Our main result can be stated as follows:

\begin{Theorem}
\label{thm:gathering}
Algorithm~\ref{alg1} solves the gathering problem for two robots in a self-stabilizing fashion for the non-rigid ASYNC model.
\end{Theorem}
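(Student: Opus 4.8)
The plan is to combine a closed-set argument with a distance-potential argument, after reducing the ASYNC bookkeeping to a manageable case analysis. Model a configuration by the two robot positions together with the two light colours (and, for a robot that has performed LOOK but not yet MOVE, its already-computed destination). Write $d$ for the distance between the robots and $S$ for the segment joining them. First I would record two invariants. (1) \emph{Gathered configurations are absorbing}: if the robots coincide, then for every colour pair the only enabled computations are ``do nothing'', ``move to the other robot'' (a zero-length move), or a recolouring without motion, so the robots stay gathered forever; moreover the Black-to-White transitions only shuffle among the gathered configurations $(W,W)$, $(W,B)$, $(B,B)$, so it suffices to drive the system into some gathered configuration. (2) \emph{Confinement}: every destination the algorithm ever computes (the other robot's position, the midpoint, or one's own position) is a convex combination of the current positions, hence lies in $S$; since a robot always moves from a point of $S$ toward a point of $S$, it remains in $S$. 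Tracking $S$ across the natural checkpoints (consecutive ``quiescent'' configurations, where neither robot has a pending move) shows $S$ can only shrink, so $d$ is non-increasing along checkpoints and bounded by its initial value $d_0$.

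Next I would eliminate infinite non-gathered executions by a case analysis on the colour pair, using fairness to guarantee activations. If the robots are both Black and apart, each activation merely recolours a robot White without moving, so fairness forces the configuration to become $(W,B)$ (or $(B,W)$) or $(W,W)$ with the same positions, or gathered. If the pair is $(W,B)$ and apart, the Black robot sees a White robot, so its only enabled action is ``do nothing'': it is frozen, while the White robot repeatedly moves to the Black robot's fixed position and never recolours (it never sees White), so since each such move covers at least $\min(\delta,d)$ it reaches the Black robot within at most $\lceil d_0/\delta\rceil$ of its activations and the robots gather. If the pair is $(W,W)$ and apart, every activation computes a midpoint move and recolours the robot Black; analysing which robot turns Black first shows that either exactly one robot completes a midpoint move, landing us in the frozen $(W,B)$ (or $(B,W)$) case with a strictly smaller $d$, or both do, landing us in $(B,B)$ with $d$ decreased by at least $2\min(\delta,d/2)$. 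A key sub-point, which I would isolate as a small lemma, is that a $(W,W)$-apart configuration can only be reached from a $(B,B)$-apart configuration with the \emph{same} positions, so when both robots do midpoint moves they aim at the \emph{same} midpoint, which is what makes the ``$2\min(\delta,d/2)$'' bound clean.

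The crux — and the step I expect to be the main obstacle — is ruling out the Zeno-type ``infinite halving'' execution $(W,W)\to(B,B)\to(W,W)\to\cdots$ in which $d$ tends to $0$ but never vanishes; this is exactly the phenomenon that breaks the class-$\mathcal{L}$ algorithm, and our modification defeats it only by exploiting the non-rigidity floor $\delta$. Concretely, while $d\ge 2\delta$ each $(W,W)$-to-$(B,B)$ step forces each moving robot to travel at least $\delta$ toward the common midpoint, so $d$ drops by at least $2\delta$; hence after at most $\lceil d_0/(2\delta)\rceil$ such steps we reach $d<2\delta$. Once $d<2\delta$, a midpoint move has target length $d/2\le\delta$ and is therefore completed, so out of a $(W,W)$ configuration either both robots reach the common midpoint and gather immediately, or one does, putting us in a $(W,B)$-apart configuration with $d<2\delta$ from which the chasing robot reaches the frozen one in at most one further move. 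Combining this with invariants (1) and (2) and fairness, from any initial colour assignment the execution passes through only finitely many non-gathered checkpoints before entering a gathered configuration, which by (1) is permanent — proving Theorem~\ref{thm:gathering}. The remaining work, where the ASYNC subtleties concentrate, is the interleaving analysis needed to justify the ``exactly one vs. both'' dichotomy in the $(W,W)$ case and the confinement/monotonicity claims, since destinations computed at stale LOOK times must be carefully accounted for.
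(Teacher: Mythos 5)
There is a genuine gap, and it sits exactly where the paper's difficulty lies. Your invariant (1), ``gathered configurations are absorbing,'' is false in this model: a robot that has already computed a destination before the robots coincide still carries a pending MOVE, and the adversary can release it afterwards, separating the robots again. This is precisely the phenomenon the paper identifies as the crux (``there is an instant when both robots are actually gathered, but are later separated because of pending moves''), and the algorithm's correctness does \emph{not} follow from reaching ``some gathered configuration''; one must show that after such a separation the colours are split so that the system re-gathers and then stays gathered (the paper's ``terminal behaviour'' analysis of the \textbf{SYM}, \textbf{FAULTY 1} and \textbf{GATHERED} subsets). What is absorbing is a gathered configuration with no pending actions, and driving the execution into \emph{that} is the hard part.

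Relatedly, your case analysis is keyed only on the colour pair of (implicitly quiescent) configurations, but the dangerous configurations are those carrying stale pending moves, and these must also be admitted as \emph{initial} states for self-stabilization (arbitrary program pointer, the paper's \textbf{ILLEGAL} subset). Concretely: in your $(W,B)$ case the Black robot is not ``frozen'' --- it typically turned Black \emph{because} it computed a midpoint move, which may still be pending while the White robot chases a stale position; this is the paper's \textbf{FAULTY 1}/\textbf{FAULTY 2} dynamics, handled there by the quantitative Lemmas~\ref{lem:distancedecreases} and~\ref{lem:smalldistance} (distance drops by at least $\min(x/2,3\delta)$ around the $\textbf{FAULTY 2}\rightarrow\textbf{SYM}\rightarrow\textbf{FAULTY 2}$ cycle, and gathering is forced once the distance is below $\delta$). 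Your ``key sub-point'' lemma --- that a $(W,W)$-apart configuration is only reached from a $(B,B)$-apart configuration with the same positions --- is also false in ASYNC: e.g.\ from $(B,B)$, one robot can finish its cycle, see the other still Black, start chasing it, and be mid-move (White, with a pending chase target) when the other finally completes its COMPUTE and turns White. So the ``exactly one vs.\ both complete a midpoint move'' dichotomy, and the $2\delta$/common-midpoint bookkeeping built on it, do not cover the executions the adversary actually uses (the very schedule that defeats the class-$\mathcal{L}$ algorithm). You acknowledge this by deferring ``the interleaving analysis,'' but that analysis --- the exhaustive treatment of phase-level configurations with pending computes and moves, which is what the paper's six-subset transition-graph proof does --- is the bulk of the proof, not a finishing detail; as written, the proposal does not establish the theorem.
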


\section{Correctness}

\subsection{Describing configurations}

We wish to prove the self-stabilizing property of our algorithm in the most general ASYNC model for the gathering specification. As a result, possible configurations must include all possible relative positions of the two robots, but also their current light color, and most importantly their current phase (and phase status) in the look-compute-move cycle. Indeed, the current phase can be seen as a \emph{program pointer} whose value can be initially corrupted in an arbitrary initial setting. A direct consequence of this observation is that all possible combinations of those parameters must be considered as possible initial configurations. Let us first list in Figure~\ref{fig:table} the various phases that can be reached by the algorithm, for two robots $A$ and $B$:

\begin{figure}[H]
	\centering
	\includegraphics[width=0.7\linewidth]{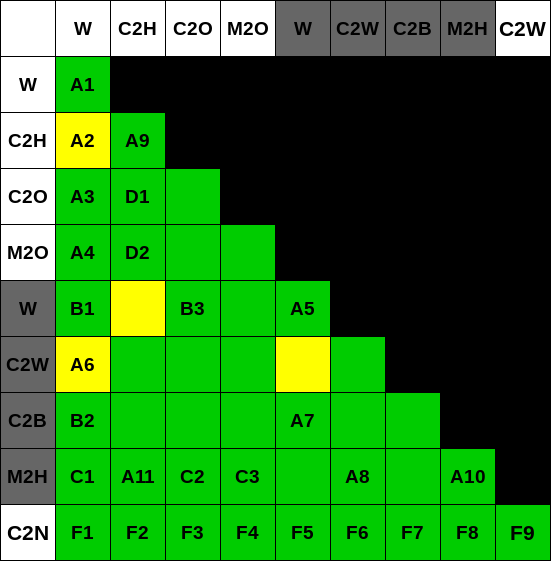}
	\caption{Possible configurations}
	\label{fig:table}
\end{figure}

The possible phases for a robot $A$ are:
\begin{itemize}
\item \textbf{W}: Wait,
\item \textbf{C2H}: Compute to Half (at the end of his COMPUTE phase, $A$ switches its color to Black, and moves towards the midpoint),
\item \textbf{C2O}: Compute to Other (at the end of his COMPUTE phase, $A$ stays White and moves towards the other robot),
\item \textbf{M2O}: Move to the Other robot's position,
\item \textbf{C2W}: Compute to White (COMPUTE phase that leads to color change and no motion),
\item \textbf{C2B}: Compute to Black (COMPUTE phase that leads to no color change and no motion),
\item \textbf{M2H}: Move to Half point,
\item \textbf{C2N}: Compute to Nothing (COMPUTE phase that leads to no motion). 
\end{itemize}

Since robots are anonymous, we only need to consider half of the possible configurations (\emph{i.e.}, the combination $(\mathbf{W},\mathbf{C2H})$ is the same as $(\mathbf{C2H},\mathbf{W})$). 
In a given configuration, the scheduler may activate either $A$, $B$, or both. In most cases, activating both robots has the same effect as activating $A$ then $B$, or $B$ then $A$. However, in a few cases, the outcome of simultaneously activating $A$ and $B$ is not deterministic, and may lead to two different configurations. Those non-deterministic configuration are outlined in yellow color in Figure~\ref{fig:table}. 

To ease the description and reasoning about the various phases of the algorithm, we divide the complete set of configurations into six subsets. 
Each subset of configurations is drawn as a graph whose vertices represent configurations and (directed) edges represent the possibility to reach another configuration by executing the algorithm of either one or both robots (3 cases). Some vertices appear in several subsets as the subsets we consider do not form a partition of all configurations. To facilitate the reading of each subset, we adopted a color code for vertices described in Figure~\ref{fig:colorcode}, and the rest of the diagram notations are presented in Figure~\ref{fig:nomenclature}. 

\begin{figure}[htbp]
	\centering
	\includegraphics[width=0.7\columnwidth]{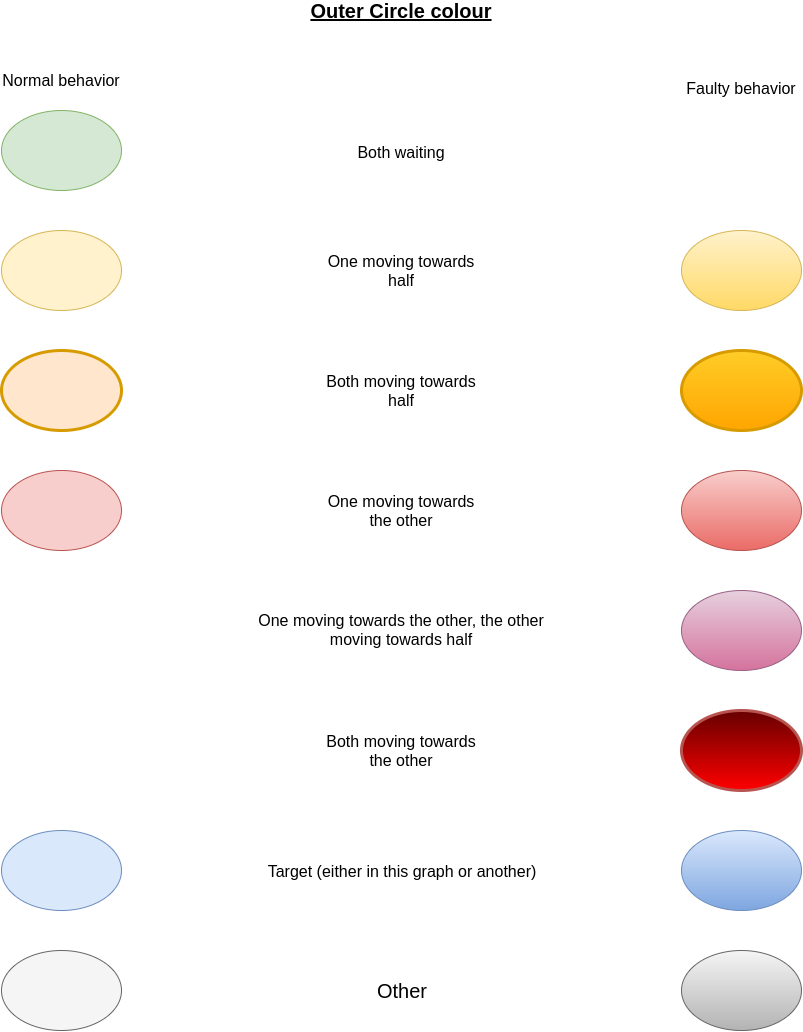}
	\caption{Vertex color code}
	\label{fig:colorcode}
\end{figure}

\begin{figure}[htbp]
	\centering
	\includegraphics[width=0.7\columnwidth]{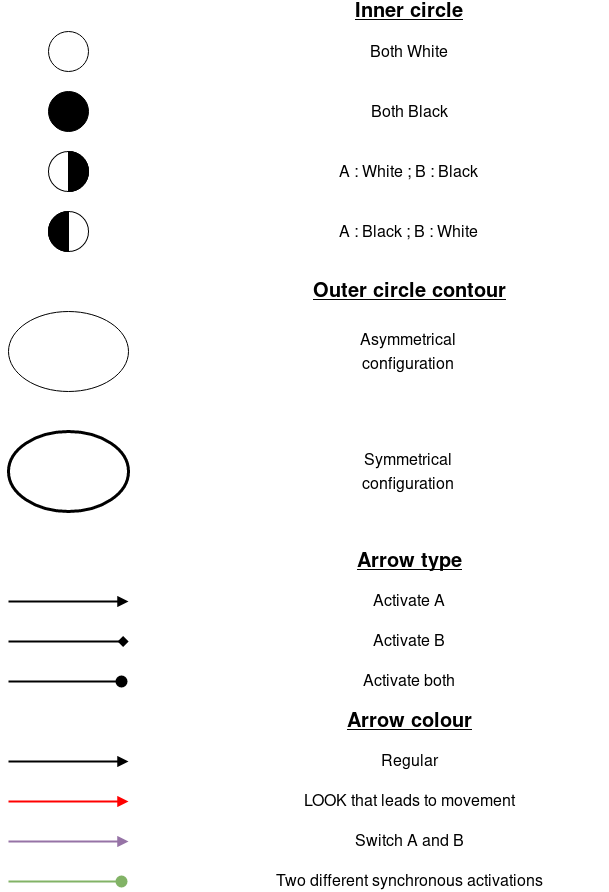}
	\caption{Graph nomenclature}
	\label{fig:nomenclature}
\end{figure}

Each graph is divided in two parts : the top part, called nominal behavior and a bottom part called terminal behavior. The behavior of the algorithm is called \emph{nominal} when the robots are not gathered. Some nodes, represented in the bottom part, change their behavior if the gathering is achieved during their activation. This is the case \emph{e.g.} when a White robot starts its LOOK phase and perceives the other robot and itself on the same location. Specific gathered behavior is called \emph{terminal}. 

\begin{Definition}[Nominal Configuration]
A configuration $c$ is \emph{nominal} when the two robots are not located in the same position in $c$.
\end{Definition}

\begin{Definition}[Terminal Configuration]
A configuration $c$ is \emph{terminal} when the two robots are located in the same position in $c$.
\end{Definition}

\begin{Definition}[Valid Configurations]
A subset ${\mathcal C}_v$ of configurations is \emph{valid} if for any configuration $c\in{\mathcal C}_v$, either \emph{(i)} $c$ is nominal and every execution leads to a terminal configuration in a finite number of steps, or \emph{(ii)} $c$ is terminal and mandates the robots to eventually remain in a gathered configuration forever.
\end{Definition}

\begin{Definition}[Faulty Configurations]
A subset ${\mathcal C}_v$ of configurations is \emph{faulty} if for at least one configuration $c\in{\mathcal C}_v$, $A$ and $B$ are moving towards different targets.
\end{Definition}

We now list the subsets we consider in the sequel, assuming $A$ and $B$ denotes the two robots in the system:

\begin{enumerate}[label=(\alph*)]
\item \textbf{SYM}: configurations where $A$ and $B$ can remain synchronized indefinitely. Those configurations would be considered the regular behavior in a FSYNC model where robots initially share the same color. The \textbf{SYM} configurations are presented in Figure~\ref{fig:SYM}.

\item \textbf{ASYM}: configurations where $A$ and $B$ are different colors, and no possibility of being synchronized again. The \textbf{ASYM} configurations are presented in Figure~\ref{fig:ASYM}.

\item \textbf{FAULTY 1}: configurations that can be reached after a White to Black de-synchronization which allow different targets for $A$ and $B$ and cannot lead back to \textbf{SYM}. The \textbf{FAULTY 1} configurations are presented in Figure~\ref{fig:FAULTY1}.

\item \textbf{FAULTY 2}: configurations that can be reached after a Black to White de-synchronization and allow different targets for $A$ and $B$. The \textbf{FAULTY 2} configurations are presented in Figure~\ref{fig:FAULTY2}.

\item \textbf{ILLEGAL}: configurations that cannot be reached by the algorithm but need to be taken into account as possible starting configurations to ensure self-stabilization. The \textbf{ILLEGAL} configurations are presented in Figure~\ref{fig:ILLEGAL}.

\item \textbf{GATHERED}: configurations that can only be reached if the gathering is complete. The \textbf{GATHERED} configurations are presented in Figure~\ref{fig:GATHERED}.
  
\end{enumerate}

\begin{figure*}[htbp]
	\centering
	\includegraphics[width=0.8\linewidth]{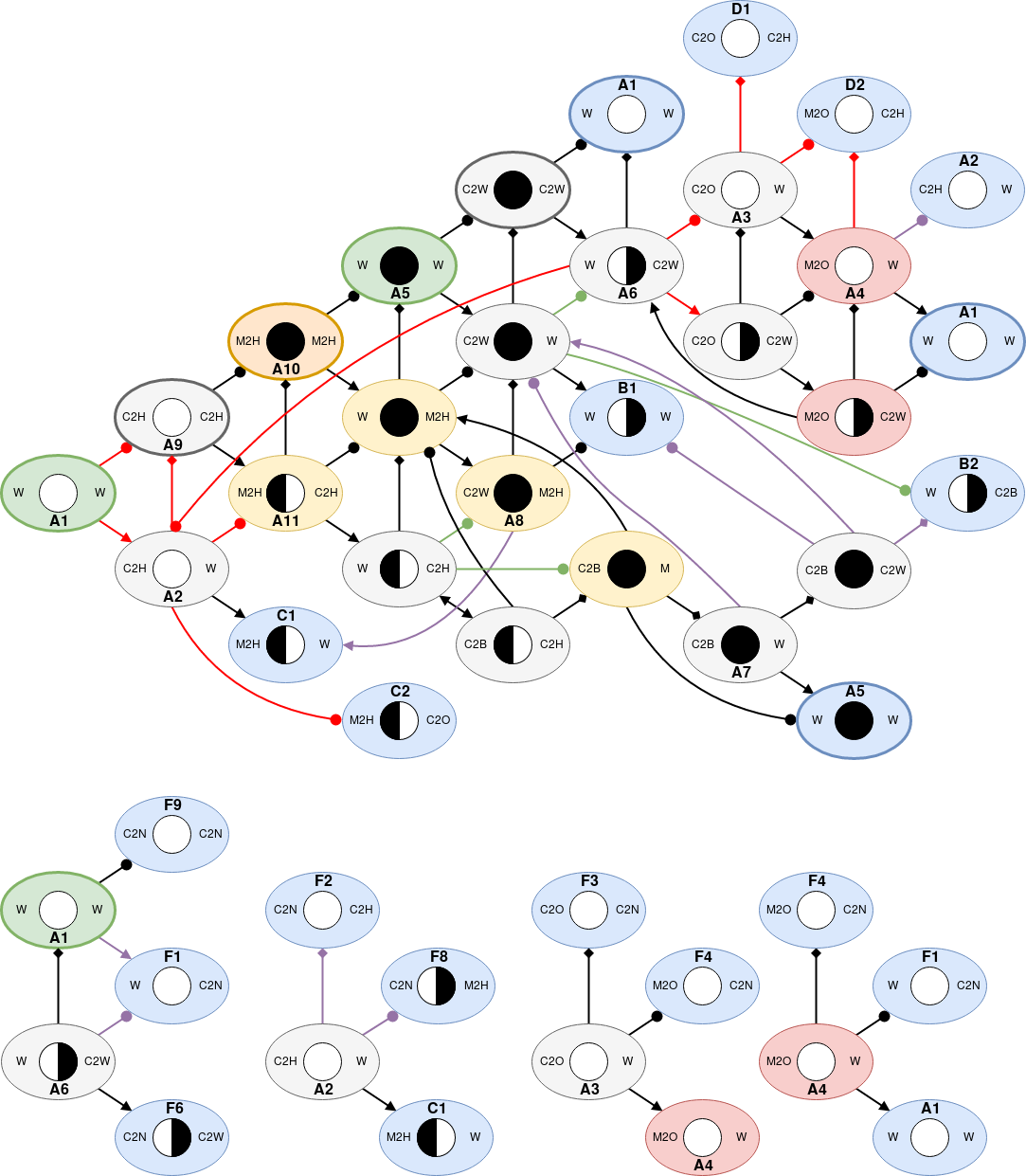}
	\caption{\textbf{SYM} configurations}
	\label{fig:SYM}
\end{figure*}

\begin{figure}[htbp]
	\centering
	\includegraphics[width=0.6\columnwidth]{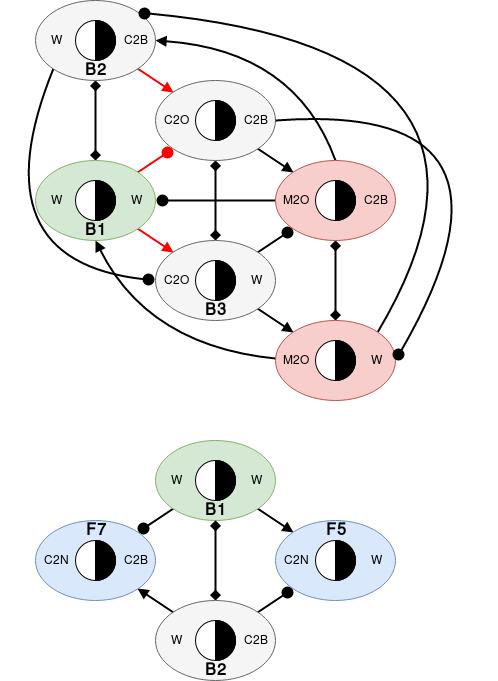}
	\caption{\textbf{ASYM} configurations}
	\label{fig:ASYM}
\end{figure}

\begin{figure}[htbp]
	\centering
	\includegraphics[width=0.6\columnwidth]{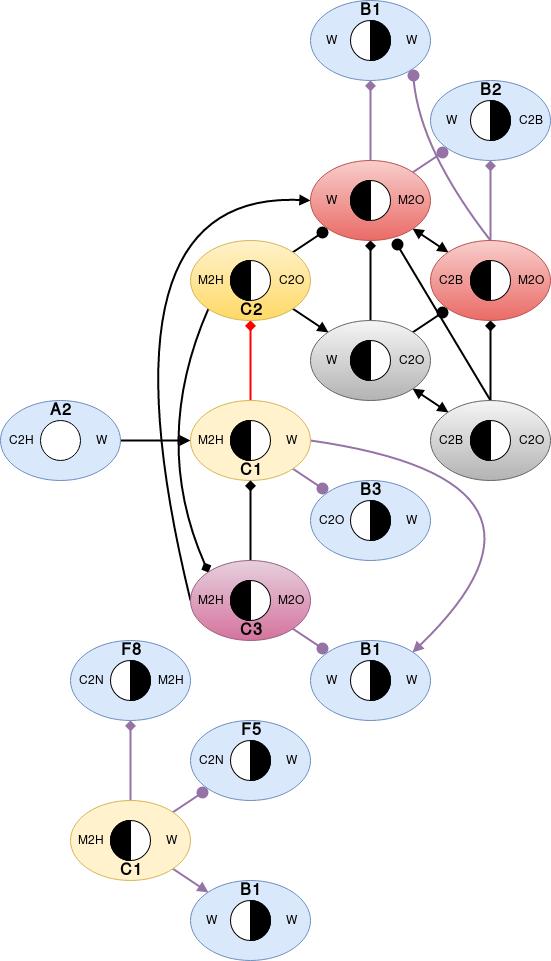}
	\caption{\textbf{FAULTY 1} configurations}
	\label{fig:FAULTY1}
\end{figure}

\begin{figure}[htbp]
	\centering
	\includegraphics[width=0.7\columnwidth]{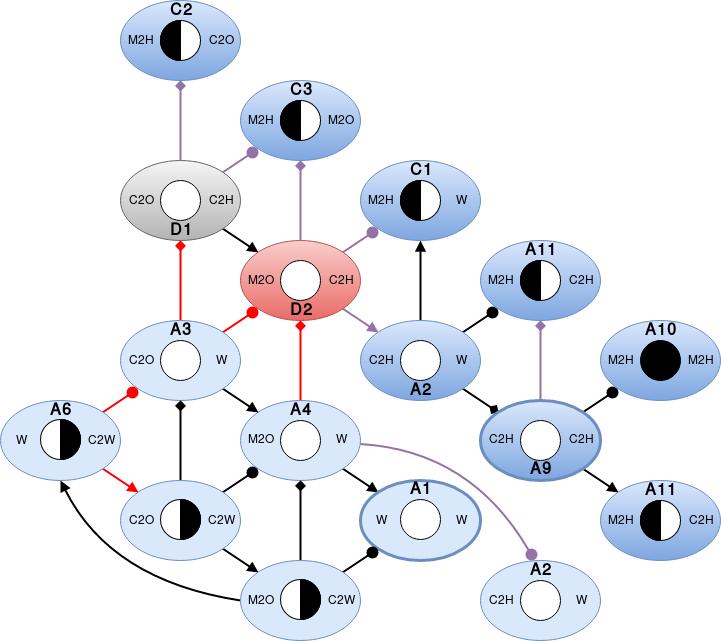}
	\caption{\textbf{FAULTY 2} configurations}
	\label{fig:FAULTY2}
\end{figure}

\begin{figure}[htbp]
	\centering
	\includegraphics[width=0.4\columnwidth]{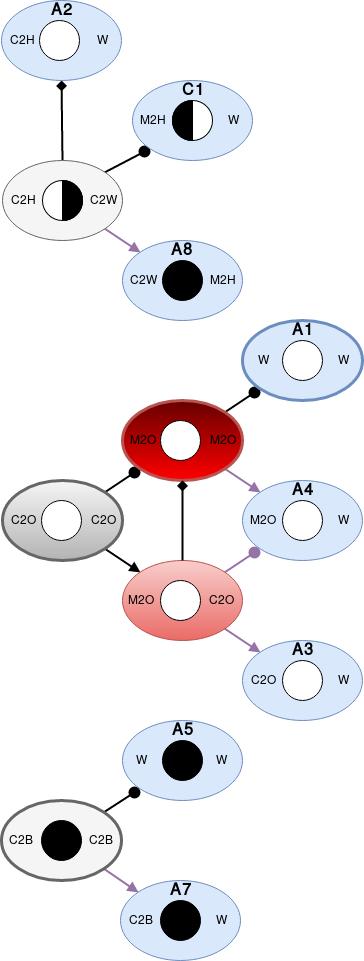}
	\caption{\textbf{ILLEGAL} configurations}
	\label{fig:ILLEGAL}
\end{figure}

\begin{figure}[htbp]
	\centering
	\includegraphics[width=0.7\columnwidth]{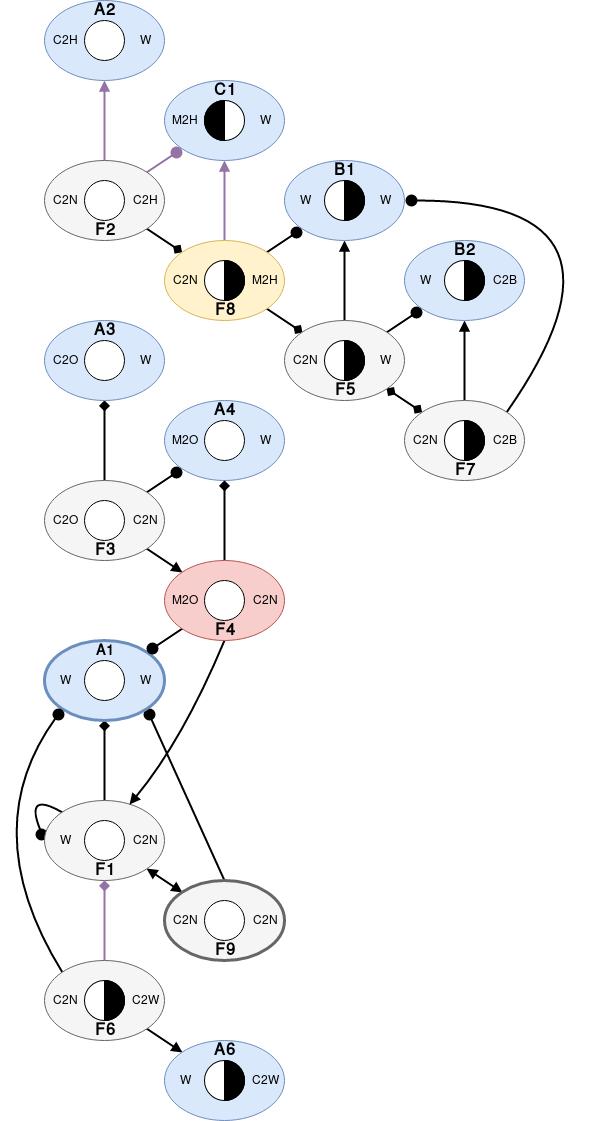}
	\caption{\textbf{GATHERED} configurations}
	\label{fig:GATHERED}
\end{figure}

\subsection{Proof of correctness}

Our main result (Theorem~\ref{thm:gathering}) is directly obtained by Lemmas~\ref{lem:trivial} and \ref{lem:allvalid}, that are presented in the sequel.

\begin{Lemma} %
\label{lem:trivial}
If every subset of configurations is valid, then the Algorithm~\ref{alg1} solves gathering in a self-stabilizing manner.
\end{Lemma}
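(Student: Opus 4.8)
The plan is to unwind the definitions and show that "every subset valid" immediately forces the gathering specification, since the gathering specification for self-stabilization asks exactly that, from an \emph{arbitrary} initial configuration, the two robots eventually reach the same point and stay there. First I would argue that the six subsets \textbf{SYM}, \textbf{ASYM}, \textbf{FAULTY 1}, \textbf{FAULTY 2}, \textbf{ILLEGAL}, \textbf{GATHERED} cover \emph{all} reachable (indeed all conceivable) configurations, including the arbitrary starting ones: this is essentially by construction of the case analysis in Figure~\ref{fig:table}, so I would phrase it as "by inspection of Figure~\ref{fig:table}, any configuration — any pair of positions, colors, and phase pointers — falls into at least one of the listed subsets." Hence any initial configuration $c_0$ lies in some valid ${\mathcal C}_v$.

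Next I would split on whether $c_0$ is nominal or terminal, using the \textbf{Valid Configurations} definition. If $c_0$ is terminal, validity gives clause \emph{(ii)}: the robots eventually remain gathered forever, which is the specification. If $c_0$ is nominal, validity gives clause \emph{(i)}: every execution reaches a terminal configuration $c_1$ in finitely many steps. Now $c_1$ is itself a configuration, so it belongs to some valid subset ${\mathcal C}_v'$; being terminal, clause \emph{(ii)} applies to it and the robots stay gathered forever. Stitching these together — finitely many steps to reach $c_1$, then gathered forever afterward — yields that the execution satisfies gathering. Since this holds for every execution from every initial configuration, Algorithm~\ref{alg1} is self-stabilizing for gathering.

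The one subtlety I would want to be careful about is the interface between the "reaches a terminal configuration" of clause \emph{(i)} and the "remain gathered forever" of clause \emph{(ii)}: a priori a terminal configuration reached mid-execution might have an exotic phase pointer (e.g.\ a robot caught with a pending move out of the gathered point) that is not obviously covered by a valid subset, or that is covered only by a subset whose terminal clause does not literally apply. So the key step, and the main obstacle, is to verify that \emph{every} terminal configuration — not just the "clean" WAIT/WAIT one — lies in a subset for which clause \emph{(ii)} holds; this is where I would lean on the \textbf{GATHERED} subset of Figure~\ref{fig:GATHERED} being closed under the algorithm's transitions and on the observation that the two clauses of validity are stated precisely so as to compose. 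Everything else is a short logical chaining argument with no real computation; the content of Theorem~\ref{thm:gathering} is then deferred to Lemma~\ref{lem:allvalid}, which must actually establish validity of each of the six subsets.
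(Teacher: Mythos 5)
Your proposal is correct and follows essentially the same route as the paper's own (much terser) proof: all configurations are covered by the subsets, validity forces every nominal configuration to reach a terminal one, and terminal configurations stay gathered, hence gathering holds from any start. Your extra care about chaining clause \emph{(i)} into clause \emph{(ii)} for terminal configurations reached mid-execution is a sound elaboration of what the paper leaves implicit, not a different argument.
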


\begin{proof}
If all subsets are valid and all configurations are included in the subsets, then it means that all configurations eventually lead to gathering. Thus the algorithm solves the gathering problem in a finite number of steps.
\end{proof}

\begin{Lemma}
\label{lem:allvalid}
Every subset of configurations of Algorithm~\ref{alg1} is valid.
\end{Lemma}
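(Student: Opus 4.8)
The plan is to prove Lemma~\ref{lem:allvalid} by a case analysis over the six subsets \textbf{SYM}, \textbf{ASYM}, \textbf{FAULTY 1}, \textbf{FAULTY 2}, \textbf{ILLEGAL}, and \textbf{GATHERED} introduced above, showing in each case that the validity condition of Definition~\ref{Definition}\,(Valid Configurations) holds: every nominal configuration in the subset reaches a terminal configuration in finitely many steps, and every terminal configuration keeps the robots gathered forever. Concretely, for each subset I would use its configuration graph (Figures~\ref{fig:SYM}--\ref{fig:GATHERED}) as a transition system and argue that (a) the graph is closed under the algorithm's moves (no edge escapes to a configuration outside the union of the six subsets), and (b) from every vertex, every fair execution eventually reaches a \textbf{GATHERED} vertex. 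Point (a) is essentially a finite bookkeeping check: for every vertex and every one of the three activation choices ($A$ alone, $B$ alone, both simultaneously, the last possibly non-deterministic as flagged in Figure~\ref{fig:table}), verify the successor is drawn in one of the figures.

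For point (b) I would establish a measure that decreases. The natural one combines the inter-robot distance $d$ with the ``synchronization type'' of the configuration. First handle \textbf{GATHERED}: once the robots coincide, a White robot that Looks sees \texttt{me.position = other.position} and does nothing, and a Black robot with a Black partner only turns White; one checks no reachable successor separates the robots, so $d$ stays $0$ — this is the terminal part of validity and also shows \textbf{GATHERED} is absorbing. Next, \textbf{ASYM}: the robots have different colors with no way back to equal colors; I would trace the cycle W/B $\to$ (White moves to Black's position) $\to$ gathered, using the non-rigidity bound $\delta$ to guarantee the White robot covers the whole remaining distance in finitely many activations (each Move advances by at least $\min(\delta, d)$, so after $\lceil d_0/\delta\rceil$ activations $d=0$), while the Black robot, seeing White, does not move. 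Then \textbf{SYM}: this is the FSYNC-like regime where the bad execution of Viglietta's algorithm lived; here I would invoke the key design feature highlighted in Section~3.2 — the \texttt{Gathered} self-loop in the White state — to argue that whenever a pending move would have re-created the all-Black configuration after a transient gathering, our algorithm instead keeps that robot White, pushing the configuration into \textbf{ASYM} (handled above) rather than back into \textbf{SYM}; absent that, synchronized midpoint moves halve $d$ and the robots genuinely meet. Finally \textbf{FAULTY 1}, \textbf{FAULTY 2}, and \textbf{ILLEGAL}: show each such configuration, after a bounded number of activations, funnels into \textbf{SYM}, \textbf{ASYM}, or \textbf{GATHERED}, so it inherits validity from those cases.

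The main obstacle I expect is the \textbf{SYM} analysis together with the non-deterministic simultaneous-activation cases. In \textbf{SYM} one must rule out \emph{all} adversarial interleavings of the COMPUTE/MOVE delays that could indefinitely postpone gathering — in particular the pattern where one robot has a stale Look (computed against an old position) and its pending move re-separates robots that momentarily coincided. The argument has to show that our \texttt{$\neg$Gathered} guard on the midpoint transition, combined with the \texttt{Gathered $\Rightarrow$ do nothing} loop, makes every such re-separation either impossible or a one-way ticket out of \textbf{SYM}; this is exactly where class $\mathcal{L}$ algorithms fail and where the proof must be most careful. A secondary difficulty is ensuring the six subsets really cover \emph{every} configuration in the sense of Section~4.1 — including every corrupted ``program pointer'' value (C2H, C2O, M2O, C2W, C2B, M2H, C2N paired arbitrarily) — so that no reachable or initial configuration is left unaccounted for; I would discharge this by an explicit enumeration matching Figure~\ref{fig:table}, then treat each figure's graph in turn as sketched.
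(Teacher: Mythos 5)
Your overall plan (closure check on the configuration graphs, then a decreasing measure to force gathering under fairness) points in the right direction, but it has a genuine gap exactly where the paper has to work hardest: the mutual dependence between \textbf{SYM} and \textbf{FAULTY 2}. You dispose of \textbf{FAULTY 2} by saying it ``funnels into \textbf{SYM}, \textbf{ASYM}, or \textbf{GATHERED} \ldots so it inherits validity from those cases,'' while your \textbf{SYM} argument implicitly assumes the faulty exits are harmless. This is circular: \textbf{SYM} can exit to \textbf{FAULTY 2} (a robot carries a stale midpoint target computed against an old position), and \textbf{FAULTY 2} re-enters \textbf{SYM}, so neither subset can inherit validity from the other without an argument that the \textbf{FAULTY 2} $\rightarrow$ \textbf{SYM} $\rightarrow$ \textbf{FAULTY 2} loop cannot be traversed forever. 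The paper resolves this by first proving conditional validity of each side (Lemmas~\ref{lem:SYM} and~\ref{lem:FAULTY2}, each excluding the exit to the other), and then breaking the loop quantitatively: each traversal of the cycle (through configurations $A6$ and $A2$) shrinks the inter-robot distance $x$ by at least $\min(\frac{x}{2},3\delta)$ (Lemma~\ref{lem:distancedecreases}), and once the distance drops below $\delta$ the next traversal necessarily produces gathering (Lemma~\ref{lem:smalldistance}), after which the terminal behavior of $A2$ can no longer lead back to $A6$ (Lemma~\ref{lem:FAULTY2terminal}, Corollary~\ref{cor:FAULTY2}). Your proposal mentions ``a measure that decreases'' in the abstract, but never instantiates it for this cycle, which is the one place where a plain funneling/inheritance argument fails; note also that stale targets here are executed against positions that have since changed, so the naive ``midpoint moves halve $d$'' bookkeeping does not apply along these faulty traversals.

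Two smaller points. First, in the non-rigid model a synchronized midpoint round reduces the distance by at least $2\delta$ (or completes the move), not by half, so your \textbf{SYM} measure should be phrased accordingly. Second, your claim that the bad \textbf{SYM} re-separation pattern is pushed into \textbf{ASYM} by the Gathered guard is the right intuition for why the Viglietta counterexample dies, but by itself it does not cover the \textbf{FAULTY 2} exits that arise without any transient gathering; those are precisely the executions handled by the paper's cycle-distance lemma. With that lemma (and its small-distance companion) added, your decomposition matches the paper's proof; without it, the argument for \textbf{SYM} and \textbf{FAULTY 2} does not go through.
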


\begin{proof}
To prove this, we proceed in order: We first prove the validity of \textbf{ASYM} (Lemma~\ref{lem:ASYM}). We then show that \textbf{FAULTY 1} can only lead to \textbf{ASYM} and is therefore valid (Lemma~\ref{lem:FAULTY1}). 

As \textbf{SYM} and \textbf{FAULTY 2} are interdependent, we use a series of Lemmas to obtain validity results for those subsets (see Corollaries~\ref{cor:FAULTY2} and \ref{cor:SYM} for \textbf{FAULTY 2} and \textbf{SYM} subsets, respectively).

Finally, we prove \textbf{ILLEGAL} (Lemma~\ref{lem:ILLEGAL}) and \textbf{GATHERED} (Lemma~\ref{lem:GATHERED}) to be valid subsets.
\end{proof}

\begin{Lemma}
\label{lem:ASYM}
The \textbf{ASYM} subset is valid.
\end{Lemma}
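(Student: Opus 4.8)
The plan is to proceed by exhaustive case analysis over the \textbf{ASYM} configuration graph (Figure~\ref{fig:ASYM}), which is finite since it consists of the combinations of the eight phase-states of the two robots restricted to those in which the robots hold different colors and cannot re-synchronize. For each vertex of that graph I would verify the \emph{valid} property from the relevant definition: if the configuration is nominal, I must show that every outgoing path (under all three activation choices: $A$ alone, $B$ alone, or both simultaneously, including the non-deterministic resolutions highlighted in Figure~\ref{fig:table}) eventually reaches a terminal configuration; if it is terminal, I must show the robots stay gathered forever.

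The key structural observation to exploit is the one already flagged in the text: once the two robots carry different colors in an \textbf{ASYM} configuration, the algorithm gives the Black robot no motion and no way to become White except via the $\mathbf{C2W}$ transition, which itself requires the other robot to be Black --- contradicting the asymmetry --- so the Black robot is effectively frozen, while the White robot, seeing Black, repeatedly moves toward the other robot's position (the $\mathbf{C2O}\to\mathbf{M2O}$ branch). First I would argue that the Black robot's position is invariant throughout any \textbf{ASYM} execution, and that its color stays Black as long as we remain in the subset. Then I would track the White robot: each of its completed LCM cycles shrinks the inter-robot distance by a factor that, thanks to the $\delta$ lower bound on non-rigid moves, is bounded away from $1$ when the distance exceeds $\delta$, and in one more cycle drops it to $0$ once the distance is at most $\delta$ (since then the move covers the whole remaining gap). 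Fairness guarantees the White robot completes such cycles infinitely often, so after finitely many steps the distance becomes $0$, i.e.\ a terminal configuration is reached. I would also check the mid-cycle vertices ($\mathbf{C2O}$, $\mathbf{M2O}$, and any stray $\mathbf{C2H}$/$\mathbf{M2H}$/$\mathbf{C2B}$ appearing in the ASYM graph as a leftover program pointer) individually, showing each either feeds into this contraction argument or is itself already on a path to gathering, and that none of them can escape \textbf{ASYM} back into \textbf{SYM}.

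For the terminal configurations appearing in the bottom ``terminal behavior'' part of the \textbf{ASYM} diagram, I would invoke the ``$\neg$Gathered'' guard: a White robot that Looks and finds itself co-located with the other does nothing, and a Black robot adjacent to a White one also does nothing (the only Black action needs the other to be Black); a Black robot adjacent to Black would turn White but that cannot arise here. Hence from a gathered \textbf{ASYM} configuration no robot ever moves away, so the robots remain gathered forever --- closing the terminal case. The main obstacle I anticipate is the bookkeeping around the asynchronous, non-rigid, mid-move vertices: in particular a White robot that took its snapshot while Black was at some location and is still executing its $\mathbf{M2O}$ when we examine the configuration --- I must make sure that an \emph{outdated} pending move cannot overshoot, relocate, or otherwise disturb the monotone distance decrease, and that combining it with the Black robot's (im)mobility and the possible non-deterministic simultaneous activation never produces a configuration outside \textbf{ASYM} or one from which gathering fails. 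Handling every such pending-move vertex consistently with the contraction measure, rather than the clean ``count completed cycles'' picture, is where the real work lies.
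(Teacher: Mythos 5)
Your proposal is correct and takes essentially the same route as the paper's proof: within \textbf{ASYM} the Black robot neither moves nor can change color, the White robot advances toward it by at least $\delta$ per completed move under fair activation, so a terminal configuration is reached in finitely many steps, and once gathered the configurations cycle (through $B1$, $B2$, $F5$, $F7$ in the paper's Figure~\ref{fig:ASYM}) without any motion, so the robots stay gathered. The paper states exactly this, only more tersely over the configuration graph, and the stale-target concern you flag is indeed dissolved by the Black robot's immobility, as you anticipate.
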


\begin{proof}
\textbf{ASYM} is the subset reached after a successful de-synchronization between $A$ and $B$, where $A$ and $B$ now have distinct colors with no option to get identical colors again. The \textbf{ASYM} subset has no possible exit to another subset, and from a nominal configuration, fair activations of the robots lead to:

\begin{itemize}
\item $A$ moves towards $B$
\item $B$ does not move
\end{itemize}

As $A$ progresses by at least $\delta$ or reaches $B$, nominal behavior leads to both robots sharing the same coordinates in a finite number of steps. Then, the next activation of $A$ least to either configuration $B1$ or $B2$. One more activation of $A$ leads to either $F5$ or $F7$ configuration, which in turn, leads to $B1$ or $B2$. Overall, the system is stuck in a no-movement infinite loop. So, the \textbf{ASYM} subset is valid.
\end{proof}

\begin{Lemma}
\label{lem:FAULTY1}
The \textbf{FAULTY 1} subset is valid.
\end{Lemma}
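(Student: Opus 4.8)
The plan is to show that every configuration in the \textbf{FAULTY 1} subset eventually enters the \textbf{ASYM} subset, whose validity was just established in Lemma~\ref{lem:ASYM}; since \textbf{ASYM} contains no exit to any other subset and every \textbf{ASYM} execution reaches (and stays at) a gathered configuration, this suffices. Concretely, \textbf{FAULTY 1} collects the configurations arising after a ``White to Black'' de-synchronization: one robot has already committed (in a pending \textbf{C2H} or \textbf{M2H} phase) to turning Black and moving to the midpoint, while the other is still White, so the two targets can genuinely differ and there is no route back to \textbf{SYM}. I would first enumerate, from Figure~\ref{fig:FAULTY1}, the handful of vertices in this subset and, for each, tabulate the effect of activating $A$, $B$, or both (three cases per vertex, with the non-deterministic both-activations cases split into their two outcomes).

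The key steps, in order: (1) Classify the vertices of \textbf{FAULTY 1} by which robot is ``ahead'' (the one with a pending Black-and-midpoint commitment) and which is White; by anonymity we may fix names so that, say, $B$ is the committed robot. (2) Trace each outgoing edge: once $B$ finishes its pending \textbf{M2H}/\textbf{C2H} it becomes Black at the midpoint while $A$ stays White, so the colors are now \emph{distinct}; show that from that point on, no activation of either robot can restore equal colors — the White robot seeing a Black robot moves toward it and stays White (line \texttt{other.color = Black} of Algorithm~\ref{alg1}), and the Black robot seeing a White robot stays Black (the \texttt{Black $\wedge$ Black} guard fails), so we have landed in a vertex of \textbf{ASYM}. (3) Check the residual cases where $B$'s pending move is interrupted early by the adversary (it still moves at least $\delta$ but not to the midpoint): the color update to Black happens regardless at the end of the COMPUTE, so the same ``distinct colors, no way back'' argument applies. (4) Handle the yellow non-deterministic both-activation vertices by verifying that \emph{both} resulting configurations lie in \textbf{FAULTY 1} $\cup$ \textbf{ASYM}, so no case escapes. (5) Conclude: every path in \textbf{FAULTY 1} reaches \textbf{ASYM} within a bounded number of activations, and \textbf{ASYM} is valid by Lemma~\ref{lem:ASYM}, hence \textbf{FAULTY 1} is valid.

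The main obstacle I expect is the bookkeeping around pending moves in the ASYNC model: a robot's \emph{color change} is latched at the end of its COMPUTE phase, but its \emph{position change} can be truncated by the adversary anywhere in $[\min(\delta,x),x]$, so ``$B$ finishes its move to the midpoint'' is really a family of outcomes parametrized by the stopping point. I need to argue that the validity conclusion is insensitive to exactly where $B$ stops — what matters is only that (a) $B$'s color is now Black, (b) $A$'s color is still White, and (c) the robots are either gathered or not, each sub-case landing in an \textbf{ASYM} vertex. The second delicate point is making sure \textbf{FAULTY 1} truly has ``no way back to \textbf{SYM}'': this is where I would lean on the structural fact that, once colors are distinct, regaining equal colors would require both robots to be Black simultaneously and then one to execute the \texttt{Black $\wedge$ Black} $\to$ White transition — but reaching ``both Black'' from ``one White, one Black'' is impossible since the White robot never turns Black while the other is Black (it only executes \texttt{move to other}), so the asymmetry is an invariant. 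Verifying this invariant holds across every edge drawn in Figure~\ref{fig:FAULTY1} is the crux of the argument.
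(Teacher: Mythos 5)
Your overall strategy matches the paper's: show that \textbf{FAULTY 1} can only exit through \textbf{ASYM} and must eventually do so, then invoke Lemma~\ref{lem:ASYM}. However, there is a genuine gap in the step where you claim that ``every path in \textbf{FAULTY 1} reaches \textbf{ASYM} within a bounded number of activations.'' This is not true as stated, and the missing ingredient is precisely what the paper's proof leans on: \textbf{FAULTY 1} contains internal cycles (the paper identifies three in the nominal part, plus a loop between $C1$ and $F8$ in the terminal part) that the adversary can traverse arbitrarily many times by repeatedly activating only one robot while the other stays frozen with its pending phase. No uniform bound on the number of activations exists; what holds is that sustaining any of these cycles forever would require one robot to never be activated, so the \emph{fair scheduler} assumption — which you never invoke — is what forces eventual exit. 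Relatedly, your step (2) treats the route to \textbf{ASYM} as simply ``the committed robot finishes its pending \textbf{C2H}/\textbf{M2H},'' but the scheduler can interleave activations so that the system wanders among the internal cycles of \textbf{FAULTY 1} before that happens; a complete proof must check that one can move from one cycle to the next but never back, and that breaking the last cycle lands in \textbf{ASYM}.

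A second, smaller gap is the terminal behavior. Validity of the subset also requires handling the case where gathering is achieved while still inside \textbf{FAULTY 1} (configuration $C1$ in the paper's figure), where the ``do nothing when gathered'' rule changes the outgoing transitions: one robot may still hold a stale target, and activating only the White robot lets the system oscillate between $C1$ and $F8$ until fairness breaks the loop toward \textbf{ASYM}. Your remark that ``each sub-case lands in an \textbf{ASYM} vertex'' glosses over this. Your invariant argument that distinct colors can never become equal again is correct and matches the definition of \textbf{ASYM}, but the substance of the lemma is the cycle-and-fairness analysis plus the gathered case, and as written your proposal defers exactly that enumeration (``I would first enumerate\dots'') rather than carrying it out.
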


\begin{proof}
The \textbf{FAULTY 1} subset is reached after a White to Black de-synchronization. It is then possible to have robot $A$ move towards the midpoint between $A$ and $B$, and $B$ moving towards $A$.

We observe that in the \textbf{FAULTY 1} subset, from a nominal configuration, the system can only exit \textbf{FAULTY 1} through \textbf{ASYM}, which is a valid subset (Lemma~\ref{lem:ASYM}).
It is also worth noticing that, while three cycles exist in \textbf{FAULTY 1}, none of them can actually be traversed forever under the fair scheduler assumption, \emph{i.e.} it would require one robot to never be activated. Also, while it is possible to switch from the first cycle to the second, and from the second to the third, it is not possible to move back to a previous cycle, and breakage of the third cycle leads the system to the \textbf{ASYM} subset.

Therefore, the nominal behavior of \textbf{FAULTY 1} leads to \text{ASYM}, which is valid, and is therefore also valid.

Note that the only configuration of \textbf{FAULTY 1} whose behavior is modified by reaching gathering is $C1$. $C1$ can then lead to either $B1$ or \textbf{ASYM}, or $F5$ and \textbf{ASYM}. Finally, if only the White robot is activated, it is possible to cycle between $C1$ and $F8$. However, the fair scheduler assumption requires this cycle to be broken and the system to move to the \textbf{ASYM} subset.
We can conclude that, despite being a faulty subset, \textbf{FAULTY 1} is merely a temporary subset leading to \textbf{ASYM}, provided that the scheduler is fair.
\\

Another possible proof of the validity of \textbf{FAULTY 1} is found in \cite{OWK17r}, Lemma 2. It is proven that if robot A performs a LOOK right after B performed a COMPUTE while robots are of different colors, gathering is unavoidable. This happens when reaching C2 from either C1 or A2. Since the only paths in \textbf{FAULTY 1} either crosses C2 or immediately exits through \textbf{ASYM}, it is easily proven that \textbf{FAULTY 1} is valid.
\end{proof}

\begin{Lemma}
\label{lem:SYM}
If we exclude the exit to \textbf{FAULTY 2}, the \textbf{SYM} subset is valid.
\end{Lemma}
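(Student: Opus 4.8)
The plan is to run a finite case analysis on the transition graph of Figure~\ref{fig:SYM}, using the inter-robot distance $d$ as a potential. First I would classify the outgoing edges of every nominal vertex of \textbf{SYM} into four groups: (i) edges that stay inside \textbf{SYM}; (ii) the edge(s) leaving to \textbf{FAULTY 2}, which the hypothesis lets us ignore; (iii) edges leaving to \textbf{FAULTY 1}, which is valid by Lemma~\ref{lem:FAULTY1}; and (iv) edges reaching a terminal (gathered) vertex. The point is that the only way to leave \textbf{SYM} is a de-synchronization: activating a single robot while both are Black yields a Black-to-White desync (the \textbf{FAULTY 2} exit we set aside), while activating a single robot while both are White yields a White-to-Black desync into \textbf{FAULTY 1}; every ``activate both'' transition (including the non-deterministic ones highlighted in Figure~\ref{fig:table}) and every transition that does not de-synchronize stays inside \textbf{SYM}. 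It then remains to show that the sub-graph of \textbf{SYM}-internal transitions drives $d$ to $0$ in finitely many steps.

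For the progress argument I would track one synchronized \emph{round}: starting from both robots Black and not gathered, any execution that remains in \textbf{SYM} must (by fairness) have both robots turn White by a motionless compute, and then, both being White and not gathered, have both robots execute a LOOK--COMPUTE--MOVE whose destination is the \emph{common} midpoint $m$ of the segment joining their two current positions. The crucial observation is that both robots aim at the same point $m$, which lies strictly between them, so neither can overshoot $m$ and their separation can only shrink. By the $\delta$-lower bound on motion: if $d\le 2\delta$, each robot's target is within $\delta$ of it, so each moves the full $d/2$ and \emph{reaches} $m$, ending the round with the robots coincident; if $d>2\delta$, each robot advances at least $\delta$ toward $m$ without reaching it, ending the round with both robots Black, still on the original segment, at distance $d'\le d-2\delta>0$. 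Since $\delta>0$ is fixed, only finitely many rounds with $d>2\delta$ occur, after which one further round makes both robots reach $m$, so $d=0$.

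I would close the argument on the terminal side: when both robots reach $m$ and turn Black they are coincident; the next synchronized compute turns them White while still coincident, and thereafter the rule ``White $\wedge$ $\neg$Gathered'' fails, so ``Gathered $\Rightarrow$ do nothing'' applies and the robots stay coincident forever, giving condition~(ii) of validity for the terminal vertices of \textbf{SYM}. The remaining bookkeeping concerns the intermediate phase-level vertices (e.g.\ one robot in \textbf{C2W} while the other is still in \textbf{W}): each cycle of the \textbf{SYM} graph either would require one robot never to be activated---excluded by fairness---or is traversed only finitely often because its traversal completes a round and hence strictly decreases $d$ by at least $\min(2\delta,d)$.

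The main obstacle is twofold. First, one must be exhaustive about the outgoing edges of every vertex---in particular the ``activate both'' cases, where interpreting simultaneity as $A$-then-$B$ can quietly introduce a White/Black pair and hence an edge to \textbf{FAULTY 1}---so as to be certain that no hidden edge stays in \textbf{SYM} while increasing $d$ or avoiding progress. Second, and more essentially, one must justify the \emph{additive} decrease $d'\le d-2\delta$ rather than a merely multiplicative one: this is exactly what distinguishes our algorithm from Viglietta's non-terminating execution, and it rests on the fact that within a \textbf{SYM} round both robots move toward the \emph{same} midpoint (the lockstep property defining \textbf{SYM}), so they genuinely converge instead of chasing a moving target.
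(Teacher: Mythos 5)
Your nominal-behavior argument is essentially the paper's: cycling inside \textbf{SYM} means both robots compute the \emph{same} midpoint before either moves, so each round shrinks the distance by at least $\min(d,2\delta)$, and every de-synchronizing step exits to a subset already known to be valid (or to the excluded \textbf{FAULTY 2}). Two small inaccuracies in your exit classification: the paper's graph also has exits from \textbf{SYM} to \textbf{ASYM}, which your four categories omit, and a single activation from a both-Black configuration does not always land in \textbf{FAULTY 2} (from a clean both-Black, no-pending-move configuration it yields exactly the good \textbf{ASYM} behavior). Both slips are harmless because \textbf{ASYM} is valid by Lemma~\ref{lem:ASYM}, but an exhaustive edge classification should name it.

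The genuine gap is on the terminal side. Validity requires that every \emph{terminal} (gathered) configuration of \textbf{SYM} eventually keeps the robots gathered forever, and \textbf{SYM} contains gathered configurations in which one robot still holds a stale pending target different from the gathering point: this happens for arbitrary (self-stabilizing) initial configurations with a corrupted phase pointer, and it happens along legitimate executions, since the \textbf{FAULTY 2} $\rightarrow$ \textbf{SYM} cycle re-enters $A2$ gathered but with an invalid target (this is stated explicitly in Lemma~\ref{lem:smalldistance}). For such configurations your assertion that ``the robots stay coincident forever'' once they meet at $m$ is simply false: the pending move can separate them again, and the correct argument---the one the paper gives---is a case analysis of the terminal transitions of $A1$--$A4$, $A6$ showing they either loop without motion or exit to \textbf{FAULTY 1}/\textbf{ASYM}, whose validity (Lemmas~\ref{lem:ASYM} and~\ref{lem:FAULTY1}) then guarantees eventual permanent gathering. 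Your proposal covers only the clean case where gathering is reached by both robots completing moves to the common midpoint, so condition \emph{(ii)} of validity is not established for all terminal vertices of \textbf{SYM}. (The paper's terminal analysis additionally shows that $A2$ cannot lead back to $A6$ once gathered, a fact reused in Corollary~\ref{cor:FAULTY2}; your route would leave that later step unsupported as well.)
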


\begin{proof}
The \textbf{SYM} subset is considered the normal starting starting subset ($A$ and $B$ waiting in White). It includes the FSYNC states ($A1$ to $A1$ through simultaneous activations), and every state that can be reached by $A1$ and lead back to $A1$ (except through \textbf{FAULTY 2} by Hypothesis).  

In \textbf{SYM}, it is possible to exit through \textbf{ASYM}, \textbf{FAULTY 1}, or \textbf{FAULTY 2}. It is also possible to cycle through \textbf{SYM}. This can either be done from $A6$ to $A2$, $A7$ to $A5$ or by cycling back to $A1$. In each case, cycling through \textbf{SYM} implies both robots target the midpoint while not moving, and then both of them moving towards the midpoint. As the distance between robots decreases by $2\delta$ in each cycle, this behavior is valid.
We have already proven \textbf{ASYM} and \textbf{FAULTY 1} to be valid (Lemmas~\ref{lem:ASYM} and \ref{lem:FAULTY1}). Therefore, \textbf{SYM} nominal behavior is valid if we do not consider exits to \textbf{FAULTY 2}.

Achieving gathering implies terminal behavior in configurations $A1$, $A2$, $A3$, $A4$, and $A6$. Possible terminal behaviors are:

\begin{itemize}
\item $A1$ can lead to $F1$, which leads back to $A1$, 
\item $A1$ can lead to $F9$ which leads back to $A1$, 
\item $A2$ can lead to \textbf{ASYM},
\item $A2$ can lead to $F2$ or $F8$, which leads back to $A2$, \textbf{FAULTY 1}, or \textbf{ASYM},
\item $A3$ can lead to $F3$ or $F4$, which leads back to $A3$, $A4$ or $A1$,
\item $A3$ can lead to $A4$, 

\item $A4$ can lead to $A1$,
\item $A4$ can lead to $F1$, which leads to $A1$, 
\item $A4$ can lead to $F4$, which leads to $A1$ or back to $A4$,

\item $A6$ can lead to $A1$, 
\item $A6$ can lead to $F1$, which leads to $A1$, 
\item $A6$ can lead to $F6$, which leads to $A1$ or back to $A6$.
\end{itemize}

From a terminal configuration, it is thus impossible to reach \textbf{FAULTY 2}, as neither $A6$ nor $A2$ can lead to their faulty targets any longer. Furthermore, Reaching $A2$ with an invalid target after achieving gathering now leads to \textbf{FAULTY 1} or \textbf{ASYM}, which are valid subsets. In turn, looping though $A2$ requires an unfair scheduler.
Let us observe that, once gathering is complete, \emph{SYM} can either loop without motion, or lead to a valid subset. Therefore, \textbf{SYM} has a valid terminal behavior.

Overall, \textbf{SYM}, excluding exits to \textbf{FAULTY 2}, is a valid subset.
\end{proof}

\begin{Lemma}
\label{lem:FAULTY2}
If we exclude the exits to \textbf{SYM}, the \textbf{FAULTY 2} subset is valid.
\end{Lemma}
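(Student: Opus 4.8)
The plan is to mirror the structure of the proof of Lemma~\ref{lem:SYM}, but from the viewpoint of \textbf{FAULTY 2}. First I would recall that \textbf{FAULTY 2} is entered after a Black-to-White de-synchronization, so that the two robots may legitimately be aiming at different targets (one at a midpoint, one at the other robot, or one stationary). The key structural claim to establish is that, once we remove from the configuration graph of Figure~\ref{fig:FAULTY2} every edge that leads into \textbf{SYM}, the remaining reachable part of the graph has only two kinds of terminal strongly connected components: those consisting entirely of no-motion loops around a gathered configuration (which satisfy clause~(ii) of the validity definition), and exits into \textbf{ASYM} or \textbf{FAULTY 1}, both of which are already known to be valid by Lemmas~\ref{lem:ASYM} and \ref{lem:FAULTY1}. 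Every other cycle in the graph must be shown to be non-traversable forever under a fair scheduler, exactly as was done for the three cycles in \textbf{FAULTY 1}: each such cycle requires some robot to be starved of activations.

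Next I would handle the progress/termination part for the nominal configurations. The argument is the usual potential-function style: while the system stays inside \textbf{FAULTY 2} without gathering, I would track the inter-robot distance and the colors, and argue that along any fair execution either (a) the distance strictly decreases by at least a fixed amount tied to $\delta$ each time a full "both move toward the midpoint" pattern completes, or (b) the lone moving robot closes the whole gap to the other, in either case reaching a terminal configuration, or (c) the execution leaves \textbf{FAULTY 2} through \textbf{ASYM} or \textbf{FAULTY 1}. I would enumerate, configuration by configuration in Figure~\ref{fig:FAULTY2}, the outgoing edges under single and double activation (taking care of the non-deterministic simultaneous-activation cases highlighted in yellow in Figure~\ref{fig:table}), and check that no edge creates a new everlasting cycle once the \textbf{SYM} exits are deleted. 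For the terminal side, I would list the \textbf{FAULTY 2} nodes whose behavior changes upon gathering and verify that each either loops with no motion or falls into a valid subset, so that once gathering occurs the robots stay gathered forever.

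The main obstacle I anticipate is the interdependence with \textbf{SYM}: because the two subsets share vertices and each can hand control to the other, I cannot prove \textbf{FAULTY 2} valid outright — only conditionally, "excluding the exits to \textbf{SYM}", just as Lemma~\ref{lem:SYM} is only "excluding the exit to \textbf{FAULTY 2}". So the real work is bookkeeping: being scrupulous that every deleted edge is genuinely an edge into \textbf{SYM}, and that what remains is still closed under the transition relation so the induction/potential argument goes through. Once both conditional lemmas are in hand, a short combination argument (to appear as Corollaries~\ref{cor:FAULTY2} and \ref{cor:SYM}) resolves the circularity: the only way the conditional hypothesis of one lemma is ever invoked is by crossing into the other subset, and any infinite ping-pong between \textbf{SYM} and \textbf{FAULTY 2} would itself have to perform infinitely many midpoint-convergence steps, contradicting $\delta > 0$; hence both subsets are unconditionally valid. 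I would therefore end this lemma's proof with the conditional statement and defer the untangling to the corollaries, flagging that the fair-scheduler assumption is what rules out the pathological cycles in both directions.
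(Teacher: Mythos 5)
Your proposal is sound and reaches the same conditional conclusion, but it is considerably heavier than the paper's own argument, which is essentially a one-observation proof: once the exit to \textbf{SYM} (through $A2$) is removed, the only remaining exit from \textbf{FAULTY 2} is into \textbf{FAULTY 1}, which is already valid by Lemma~\ref{lem:FAULTY1}, and that is the whole proof. The machinery you propose to build inside this lemma --- the enumeration of configurations of Figure~\ref{fig:FAULTY2}, the fairness argument against internal cycles, and especially the potential-function argument tracking the inter-robot distance --- is not used by the paper here; the distance-decrease reasoning is deliberately deferred to Lemmas~\ref{lem:distancedecreases}, \ref{lem:smalldistance} and \ref{lem:FAULTY2terminal}, where it is applied to the \textbf{FAULTY 2} $\rightarrow$ \textbf{SYM} $\rightarrow$ \textbf{FAULTY 2} loop rather than to \textbf{FAULTY 2} in isolation. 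Your extra care about internal cycles under fairness is not wasted (the paper leaves that implicit in the figure), so your version is more self-contained, at the cost of duplicating work that the paper factors out. One caveat on your closing sketch of the untangling: the paper does not resolve the \textbf{SYM}/\textbf{FAULTY 2} circularity by a direct ``contradiction with $\delta>0$''; rather, each traversal of the loop shrinks the distance by at least $\min(\frac{x}{2},3\delta)$, so the distance eventually drops below $\delta$, after which Lemma~\ref{lem:smalldistance} forces gathering on the next pass, and Corollary~\ref{cor:FAULTY2} then uses the fact (from Lemma~\ref{lem:SYM}) that $A2$ can no longer lead to $A6$ once gathered to break the loop --- your halving-style decrease alone would not yield a contradiction, since $\min(\frac{x}{2},3\delta)$ degenerates to $\frac{x}{2}$ for small $x$. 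Since you defer that part to the corollaries anyway, this does not invalidate your proof of the present lemma, but you should adopt the paper's mechanism when you get there.
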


\begin{proof}
Excluding the path to \textbf{SYM} through $A2$ leaves \textbf{FAULTY 1} as the only possible exit for \textbf{FAULTY 2}. Because we have proven \textbf{FAULTY 1} to be valid (Lemma~\ref{lem:FAULTY1}), \textbf{FAULTY 2} is also valid.
\end{proof}

\begin{Lemma}
\label{lem:distancedecreases}
The \textbf{FAULTY 2} $\rightarrow$ \textbf{SYM} $\rightarrow$ \textbf{FAULTY 2} cycle reduces the distance $x$ between $A$ and $B$ by at least $min(\dfrac{x}{2},3\delta)$.
\end{Lemma}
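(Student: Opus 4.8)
## Proof Proposal

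The plan is to track the interrobot distance through one full traversal of the cycle \textbf{FAULTY 2} $\rightarrow$ \textbf{SYM} $\rightarrow$ \textbf{FAULTY 2}, accounting separately for each robot's contribution to the change in distance. First I would identify precisely which configurations the system passes through on this cycle: by the structure described in Lemma~\ref{lem:SYM} (and Figures~\ref{fig:SYM} and~\ref{fig:FAULTY2}), the re-synchronization into \textbf{SYM} and the subsequent de-synchronization back into \textbf{FAULTY 2} must each pass through the relevant midpoint-targeting configurations (the $A2$/$A5$/$A6$ family on the \textbf{SYM} side). The key observation is that, along this cycle, there is one ``move to other'' step performed by one robot (the White robot heading toward the other, characteristic of the faulty de-synchronization) and then one ``move to midpoint'' step performed by \emph{each} robot as the system cycles through \textbf{SYM}. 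Since the adversary guarantees progress of at least $\delta$ on each move (or completion if the target is closer than $\delta$), I would lower-bound the total distance reduction by summing these contributions.

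The heart of the argument is the following accounting, carried out on the initial distance $x$. If at every relevant move the remaining target distance exceeds $\delta$, then: the ``move to other'' step reduces the gap by at least $\delta$; and each of the two ``move to midpoint'' steps, performed from a configuration where the robots are separated by some distance $x' \le x$, moves that robot at least $\delta$ closer to the midpoint, reducing the interrobot distance by at least $\delta$ as well. This yields a reduction of at least $3\delta$. On the other hand, if at some step the target is within distance $\delta$, the adversary may still stop the robot exactly at its target: a ``move to midpoint'' that completes halves the current distance, giving a reduction of at least $x'/2 \ge$ (something I must relate to $x/2$), and a completed ``move to other'' gathers the two robots (reduction of the full current distance). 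I would argue that in every branch the net reduction is at least $\min(x/2, 3\delta)$: either we accumulate $3\delta$ from three interrupted moves, or some move completes and we get at least a halving of the distance that was present at the start of the cycle (using that intermediate distances in the cycle are bounded below by quantities comparable to $x$, since before the midpoint moves are executed the robots have not yet closed the original gap).

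The main obstacle I anticipate is bookkeeping the \emph{order} in which the color changes and the moves are interleaved by the asynchronous adversary, and in particular ensuring that the distance against which we measure the ``$x/2$'' term is genuinely the distance $x$ at the \emph{entry} to the cycle, not some already-reduced intermediate value. Concretely, a robot may perform its LOOK (fixing its midpoint target) when the distance is $x$, but execute its MOVE only after the other robot has already moved; I must check that the pending-move structure of the \textbf{FAULTY 2}/\textbf{SYM} diagrams does not allow the $x/2$ contribution to degrade. I would handle this by a careful case split on the diagram edges: enumerate the (finitely many) paths realizing the \textbf{FAULTY 2} $\rightarrow$ \textbf{SYM} $\rightarrow$ \textbf{FAULTY 2} cycle, and for each verify that the three progress-generating moves are present and that at least one of them, if it completes rather than being interrupted, does so with a target at distance $\ge x/2$ from the moving robot. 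Summarizing: in the ``all interrupted'' branch the bound $3\delta$ holds, and otherwise the first completed move already forces a reduction of at least $x/2$, so in all cases the distance drops by at least $\min\!\left(\dfrac{x}{2}, 3\delta\right)$, as claimed.
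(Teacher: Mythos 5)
Your plan follows essentially the same route as the paper's proof: the paper likewise observes that the cycle must traverse $A6$ and then $A2$, fixes coordinates ($A$ White at $0$, $B$ Black at $x$), and by enumerating the possible paths accounts for the same three moves (one move-to-other by the White robot, then one move-to-midpoint by each robot), each contributing at least $\delta$ unless it completes, concluding that the distance back at $A6$ lies in $[-\frac{x}{2},\,x-3\delta]$. One refinement to your $\frac{x}{2}$ branch: a completed midpoint move only halves the distance $x'$ present at that robot's Look (not $x$), so the correct bookkeeping is that the reduction already accumulated, $x-x'$, plus $\frac{x'}{2}$ is at least $\frac{x}{2}$ --- which is precisely what the paper's assertion that $B$'s pending target at $A2$ lies in $[\frac{x}{2},x]$ encodes.
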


\begin{proof}
Looping through the \textbf{FAULTY 2} $\rightarrow$ \textbf{SYM} $\rightarrow$ \textbf{FAULTY 2} cycle requires to traverse configuration $A6$, and then configuration $A2$.

Let us assume that in $A6$, we have:
\begin{itemize}
\item $A$ is White, in position $0$ with no target,
\item $B$ is Black, in position $x$ with no target.
\end{itemize}

Then, by exploring every possible path, when reaching $A2$, the following holds:

\begin{itemize}
\item $A$ is in position $[min(\delta,x),x]$ with no target,
\item $B$ is in position $x$ with either $\dfrac{x}{2}$ or $[\dfrac{x}{2},x]$ as a target.
\end{itemize}

Now, to go back to $A6$, the system first needs to exit $A2$, which necessarily makes $A$ target the midpoint between $A$ and $B$. It then needs to execute both targets. 
This implies that, when back in $A6$, the distance between $A$ and $B$ is now in the interval $[-\dfrac{x}{2},x-3\delta]$.
\end{proof}

\begin{Lemma}
\label{lem:smalldistance}
After reaching $A6$ with a distance less than $\delta$, reaching $A2$ again implies gathering has been achieved.
\end{Lemma}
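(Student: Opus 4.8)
The plan is to combine the path analysis already carried out in Lemma~\ref{lem:distancedecreases} with the fact that, in the non-rigid model, a scheduled move whose length is at most $\delta$ cannot be interrupted. Following the conventions of that lemma, I would fix coordinates so that in configuration $A6$ robot $A$ is White at position $0$ with no pending target and robot $B$ is Black at position $x$ with no pending target, and assume in addition that $x<\delta$.

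The first step is to observe that, as long as the two robots are not co-located, their colors are frozen at ($A$ White, $B$ Black). Indeed, in Algorithm~\ref{alg1} a Black robot turns White only when it sees the other robot Black, so $B$ stays Black while $A$ is White; and a White robot turns Black only on the branch where it sees the other robot White and heads for the midpoint, so $A$ stays White while $B$ is Black. Consequently the only way the configuration can leave $A6$ is for $A$ --- which, being White and seeing $B$ Black, takes $B$'s position as its destination --- to perform a MOVE toward $B$; robot $B$, however it is activated, does nothing and remains at $x$. In particular, no path of the \textbf{SYM} diagram leaving $A6$ can let $B$ acquire a target, or let the two robots exchange or merge colors, before $A$ has reached $B$.

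The second step is to apply the non-rigidity bound: a robot whose destination lies at distance $x$ travels a distance in $[\min(\delta,x),x]$. Since $x<\delta$, this interval collapses to the single point $x$, so the adversary has no choice and $A$'s move lands it exactly on $B$'s position. Hence the very first configuration after $A6$ in which anything has changed already has the two robots gathered; moreover, by the description of $A2$ recalled in Lemma~\ref{lem:distancedecreases}, in $A2$ robot $A$ occupies position $[\min(\delta,x),x]=\{x\}$, i.e. exactly $B$'s position, so any $A2$ reached downstream of this $A6$ is a terminal configuration. Therefore reaching $A2$ again implies that gathering has been achieved.

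I expect the only genuinely delicate point to be the color-freezing claim of the first step, namely checking on Figure~\ref{fig:SYM} that there is truly no alternative route from $A6$ to $A2$ that passes through a de-synchronization while the robots are still apart; any such route would require a color change that Algorithm~\ref{alg1} forbids while $A$ is White, $B$ is Black, and they are not co-located. Once that case check is in place, the degenerate-interval observation finishes the argument, and this lemma is precisely what closes the \textbf{FAULTY 2}~$\leftrightarrow$~\textbf{SYM} analysis: by Lemma~\ref{lem:distancedecreases} the distance strictly shrinks along the cycle, so it eventually drops below $\delta$, after which completing one more turn of the cycle --- which, again by Lemma~\ref{lem:distancedecreases}, forces passing through $A2$ --- is impossible without having already gathered.
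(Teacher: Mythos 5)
Your second step is sound and is, in substance, the paper's own proof: by the path analysis of Lemma~\ref{lem:distancedecreases}, any return to $A2$ finds $A$ at a position in $[\min(\delta,x),x]$ while $B$ is still at $x$; since $x<\delta$ the non-rigidity interval degenerates to $\{x\}$, so $A$ has landed exactly on $B$ and gathering is already achieved when $A2$ is entered (the paper phrases this as: at the end of $D2$, gathering is achieved because $A$ ``would have moved to $B$'', even though a stale target is still pending in $A2$). If you kept only that argument, your proof would match the paper's.

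The first step, however, contains a genuine error, and you were right to flag it as the delicate point. You read ``no target'' in the $A6$ hypothesis as ``nothing pending at all'' and conclude that the colors stay frozen at ($A$ White, $B$ Black) and that $B$ can neither move nor acquire a target until $A$ reaches it. This ignores the defining subtlety of ASYNC: a robot may be in the middle of a COMPUTE based on a \emph{stale} snapshot, and such a pending compute changes its color regardless of the current colors. That is exactly what places $A6$ inside the \textbf{FAULTY 2} $\rightarrow$ \textbf{SYM} cycle rather than in \textbf{ASYM} (where your frozen-color description would indeed hold and no cycle through $A2$ would exist). Your claim is directly inconsistent with Lemma~\ref{lem:distancedecreases} itself, which states that $B$ reaches $A2$ carrying a midpoint target $\dfrac{x}{2}$ or $[\dfrac{x}{2},x]$: a Black robot that only ever sees a White partner never computes such a target, so $B$ must change color along the way via its pending compute, and for the target value $\dfrac{x}{2}$ it must even take its snapshot before $A$ has moved. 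For the same reason, your assertion that ``the very first configuration after $A6$ in which anything has changed already has the two robots gathered'' is false. Since the conclusion you actually need follows from the second step alone, the fix is simply to delete the first step rather than to repair it.
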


\begin{proof}
If the distance $x$ between $A$ and $B$ in configuration $A6$ is less than $\delta$, then at the end of $D2$, gathering is achieved, as $A$ would have moved to $B$. However, a robot still has an invalid target as it enters $A2$.
\end{proof}

\begin{Lemma}
\label{lem:FAULTY2terminal}
Repeating the \textbf{FAULTY 2} $\rightarrow$ \textbf{SYM} $\rightarrow$ \textbf{FAULTY 2} cycle leads to a terminal configuration.
\end{Lemma}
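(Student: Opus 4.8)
The plan is to read this off as an almost immediate consequence of the two preceding lemmas: Lemma~\ref{lem:distancedecreases} gives a quantitative shrinking of the inter-robot distance per cycle, and Lemma~\ref{lem:smalldistance} gives a threshold below which one more cycle forces gathering. So the whole argument is a ``distance decreases past a threshold, then we are done'' argument.

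First I would set up the sequence of distances. Suppose, for contradiction, that the cycle \textbf{FAULTY 2} $\rightarrow$ \textbf{SYM} $\rightarrow$ \textbf{FAULTY 2} is traversed infinitely often without ever reaching a terminal configuration. Let $x_0, x_1, x_2, \dots$ be the distances between $A$ and $B$ measured at the successive passages through configuration $A6$ (recall from the proof of Lemma~\ref{lem:distancedecreases} that each full traversal of the cycle passes through $A6$ and then $A2$). By Lemma~\ref{lem:distancedecreases}, $x_{i+1} \le x_i - \min\!\big(\tfrac{x_i}{2}, 3\delta\big)$ for every $i$. While $x_i \ge 6\delta$ this reads $x_{i+1} \le x_i - 3\delta$, so after at most $\lceil x_0/(3\delta)\rceil$ cycles we reach an index $i$ with $x_i < 6\delta$; from that index on, $\min(\tfrac{x_i}{2},3\delta) = \tfrac{x_i}{2}$, hence $x_{i+1} \le \tfrac{x_i}{2}$, so the distance at least halves at every further cycle and drops strictly below $\delta$ within at most $\lceil \log_2 6 \rceil = 3$ additional cycles. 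In particular, after finitely many traversals the system reaches configuration $A6$ with a distance strictly smaller than $\delta$.

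Then I would invoke Lemma~\ref{lem:smalldistance}: since the cycle is (by the contradiction hypothesis) traversed at least once more, configuration $A2$ is reached again after this passage through $A6$, and Lemma~\ref{lem:smalldistance} states precisely that this implies gathering has already been achieved along the way, i.e. the two robots are co-located, which is the definition of a terminal configuration. This contradicts the assumption that no terminal configuration is ever reached. Hence repeating the cycle necessarily produces a terminal configuration in a finite number of traversals. Fairness of the scheduler is only needed here to guarantee that the cycle traversals actually complete (both robots being activated), which is implicit in ``repeating the cycle''.

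I do not expect a real obstacle in this particular lemma — it is essentially bookkeeping. The one point to be careful about is the counting in the first paragraph: checking that the regime split at $x_i = 6\delta$ is exhaustive, that the signed overshoot allowed in the proof of Lemma~\ref{lem:distancedecreases} (the interval $[-\tfrac{x}{2}, x-3\delta]$) still translates into the nonnegative distance decreasing as claimed in the \emph{statement} of that lemma, and that the distance cannot stabilize just above $\delta$ — all of which are already subsumed by Lemma~\ref{lem:distancedecreases}, so the genuine difficulty lives in that earlier lemma's case analysis rather than here.
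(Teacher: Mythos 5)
Your proposal is correct and takes essentially the same route as the paper's proof: apply Lemma~\ref{lem:distancedecreases} to force the distance below $\delta$ after finitely many traversals of the cycle, then invoke Lemma~\ref{lem:smalldistance} to conclude that the next traversal yields gathering. The only difference is that you make the bookkeeping explicit (the regime split at $6\delta$ and the halving count) and phrase it as a contradiction, whereas the paper states the same argument directly and more briefly.
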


\begin{proof}
As we have proven in Lemma~\ref{lem:distancedecreases}, looping this cycle reduces the distance by at least $min(\dfrac{x}{2},3\delta)$. Because of this, an scheduler repeating this pattern cannot prevent the robots from getting (strictly) closer than $\delta$ from one another in a finite number of steps. Once this has happened, the next use of the cycle necessarily leads to a gathering (by Lemma~\ref{lem:smalldistance}).
The nominal behavior of FAULTY 2 is then valid.
\end{proof}

\begin{Corollary}
\label{cor:FAULTY2}
\textbf{FAULTY 2} is valid.
\end{Corollary}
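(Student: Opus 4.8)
The plan is to assemble Corollary~\ref{cor:FAULTY2} from the two conditional validity statements already proven --- Lemma~\ref{lem:FAULTY2} (\textbf{FAULTY 2} is valid once its exit to \textbf{SYM} is removed) and Lemma~\ref{lem:SYM} (\textbf{SYM} is valid once its exit to \textbf{FAULTY 2} is removed) --- together with the quantitative termination guarantee of Lemma~\ref{lem:FAULTY2terminal}. Recall that a subset is valid iff every nominal configuration in it reaches a terminal configuration in finitely many steps and every terminal configuration keeps the robots gathered forever; I would verify these two requirements in turn.

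For the nominal requirement, fix an arbitrary fair execution starting from a nominal \textbf{FAULTY 2} configuration and argue by cases on the sequence of subsets it visits. If the execution never re-enters \textbf{SYM}, then by Lemma~\ref{lem:FAULTY2} (its only remaining exit being \textbf{FAULTY 1}, which is unconditionally valid by Lemma~\ref{lem:FAULTY1}) it terminates. If the execution enters \textbf{SYM} and never returns to \textbf{FAULTY 2}, then the hypothesis of Lemma~\ref{lem:SYM} is satisfied and the execution terminates, possibly after passing through the valid subsets \textbf{ASYM} or \textbf{FAULTY 1}. The only remaining possibility is an execution that alternates between \textbf{SYM} and \textbf{FAULTY 2} infinitely often; each round trip forces a full traversal of the \textbf{FAULTY 2}~$\rightarrow$~\textbf{SYM}~$\rightarrow$~\textbf{FAULTY 2} cycle, which by Lemma~\ref{lem:distancedecreases} shrinks the inter-robot distance $x$ by at least $\min(\tfrac{x}{2},3\delta)$. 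By Lemma~\ref{lem:FAULTY2terminal} this forces the distance strictly below $\delta$ after finitely many round trips and the next round trip produces a terminal configuration, contradicting infinite alternation. Hence every nominal \textbf{FAULTY 2} execution reaches a terminal configuration in finite time.

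For the terminal requirement, I would inspect the gathered-behavior (bottom) nodes of Figure~\ref{fig:FAULTY2} together with the ones they lead into, reusing the observation already made in the proofs of Lemmas~\ref{lem:SYM} and~\ref{lem:smalldistance}: when the robots first become co-located the system may still carry a stale target, whose execution momentarily separates them but then routes the system into $A2$ with an invalid target, from which the only reachable subsets are \textbf{FAULTY 1} and \textbf{ASYM}. Since both are valid (Lemmas~\ref{lem:FAULTY1} and~\ref{lem:ASYM}), the robots re-gather in finitely many steps and then remain gathered, no gathered configuration mandating a further move. This establishes the terminal requirement.

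I expect the main obstacle to be the mutual dependence of \textbf{SYM} and \textbf{FAULTY 2}: each was proven valid only after deleting the edge into the other, so the argument must rule out an infinite ping-pong between them. This is precisely where Lemma~\ref{lem:FAULTY2terminal} and scheduler fairness are used, but care is needed to check that every \textbf{SYM}-to-\textbf{FAULTY 2}-to-\textbf{SYM} round trip is indeed one full instance of the measured cycle of Lemma~\ref{lem:distancedecreases} (so the distance bound compounds across round trips), and that the boundary case of Lemma~\ref{lem:smalldistance} is reached before the distance bound can go negative or the robots overshoot past one another.
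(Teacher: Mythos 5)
Your proposal is correct and follows essentially the same route as the paper: the nominal case is discharged via Lemma~\ref{lem:FAULTY2terminal} (built on Lemmas~\ref{lem:distancedecreases} and~\ref{lem:smalldistance}) to kill the \textbf{SYM}/\textbf{FAULTY 2} ping-pong, and the terminal case via the observation from Lemma~\ref{lem:SYM} that after gathering $A2$ can no longer lead back to $A6$ but only to the valid subsets \textbf{FAULTY 1} or \textbf{ASYM}. Your write-up is merely a more explicit case analysis of the same argument the paper gives tersely.
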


\begin{proof}
By Lemma~\ref{lem:FAULTY2terminal}, the nominal behavior of \textbf{FAULTY 2} is valid. 

Now, \textbf{FAULTY 2} does not have a terminal behavior. We can, however, note that in terminal behavior, $A2$ cannot lead to $A6$, as we have proven in Lemma~\ref{lem:SYM}. This implies that the cycle leads to gathering, and is then broken when reaching $A2$. 

Therefore, \textbf{FAULTY 2} is a valid subset.
\end{proof}

\begin{Corollary}
\label{cor:SYM}
\textbf{SYM} is valid.
\end{Corollary}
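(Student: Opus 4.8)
The plan is to derive Corollary~\ref{cor:SYM} by closing the one gap left open in Lemma~\ref{lem:SYM}. That lemma already establishes everything we need about \textbf{SYM} \emph{except} for the behaviour of executions that leave \textbf{SYM} through an edge into \textbf{FAULTY 2}: for nominal configurations it shows that the internal cycles ($A6\to A2$, $A7\to A5$, and the loop back to $A1$) each force both robots to eventually move to the midpoint and hence shrink the inter-robot distance by at least $2\delta$, so no fair execution stays inside \textbf{SYM} forever; and for terminal configurations it shows that once gathering is achieved neither $A2$ nor $A6$ can produce a faulty target, so \textbf{SYM} loops without motion or falls into a valid subset. The only missing piece is therefore that exits to \textbf{FAULTY 2} are harmless.

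So the first step is simply to invoke Corollary~\ref{cor:FAULTY2}: \textbf{FAULTY 2} is a valid subset. Combined with the validity of \textbf{ASYM} (Lemma~\ref{lem:ASYM}) and of \textbf{FAULTY 1} (Lemma~\ref{lem:FAULTY1}), every exit edge of \textbf{SYM} now leads into a subset that is known to be valid, i.e. every such execution reaches and thereafter retains a gathered configuration in finitely many steps. Plugging this into the case analysis of Lemma~\ref{lem:SYM} — which was conditional only on the \textbf{FAULTY 2} exclusion — immediately upgrades its conclusion to: every nominal configuration of \textbf{SYM} reaches a terminal configuration in finite time, and every terminal configuration stays gathered. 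That is exactly the definition of a valid subset, giving the corollary.

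The point that needs to be stated carefully is the apparent circularity between \textbf{SYM} and \textbf{FAULTY 2}: Lemma~\ref{lem:SYM} excludes \textbf{FAULTY 2}\,$\to$ and Lemma~\ref{lem:FAULTY2} excludes \textbf{SYM}\,$\to$, so neither conditional statement alone suffices. I would make explicit that the circularity is broken by the monovariant of Lemmas~\ref{lem:distancedecreases} and~\ref{lem:FAULTY2terminal}: each full \textbf{FAULTY 2}\,$\leftrightarrow$\,\textbf{SYM} round trip decreases the distance $x$ between the robots by at least $\min(x/2,3\delta)$, so an adversary cannot bounce between the two subsets indefinitely, and any such bouncing execution is driven either into a genuinely absorbing valid subset (\textbf{ASYM}, \textbf{FAULTY 1}, or \textbf{GATHERED}) or into a no-motion terminal loop. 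Hence the hard part here is not a new argument but the bookkeeping that ties together Lemma~\ref{lem:SYM}, Corollary~\ref{cor:FAULTY2}, and the distance-decrease lemmas into a single non-circular chain; once that is spelled out, the corollary is a one-line consequence.
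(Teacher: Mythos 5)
Your proposal matches the paper's argument: the paper proves Corollary~\ref{cor:SYM} exactly by combining Corollary~\ref{cor:FAULTY2} with Lemma~\ref{lem:SYM}, the circularity having already been broken on the \textbf{FAULTY 2} side via the distance-decrease Lemmas~\ref{lem:distancedecreases}--\ref{lem:FAULTY2terminal}. Your additional remarks simply make explicit what the paper leaves implicit, so the approach is essentially the same.
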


\begin{proof}
From Corollary~\ref{cor:FAULTY2} and Lemma~\ref{lem:SYM}.
\end{proof}

\begin{Lemma}
\label{lem:ILLEGAL}
The \textbf{ILLEGAL} subset is valid.
\end{Lemma}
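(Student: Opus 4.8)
The plan is to argue that every configuration in the \textbf{ILLEGAL} subset, although unreachable by any execution of Algorithm~\ref{alg1} from a legitimate start, nevertheless feeds into one of the subsets already shown to be valid after a bounded number of activations. First I would enumerate the vertices appearing in Figure~\ref{fig:ILLEGAL} and, for each one, describe what its ``program pointer'' (the corrupted phase status) commits the robot to do upon its next activation: a phase such as \textbf{C2H}, \textbf{C2O}, \textbf{C2W}, \textbf{C2B} or \textbf{C2N} determines both the color change and the pending target, while a \textbf{M2O} or \textbf{M2H} phase commits to completing a (possibly partial) move. The key observation is that a single activation of the robot whose phase is corrupted collapses that pointer to a legitimate state: after the COMPUTE or MOVE completes, the robot is back in \textbf{W} with a color and a position that are consistent with the algorithm, so the resulting configuration is a genuine configuration of one of \textbf{SYM}, \textbf{ASYM}, \textbf{FAULTY 1}, \textbf{FAULTY 2}, or \textbf{GATHERED}.

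Concretely, I would split the case analysis by the observed color pair and whether the two robots are co-located. If at least one robot is Black and the pending/derived color makes the pair asymmetric with no route back to a common color, the configuration lands in \textbf{ASYM}, valid by Lemma~\ref{lem:ASYM}. If the corrupted phase is one that could only have been produced by a White$\to$Black (resp.\ Black$\to$White) mismatch of pending moves, it lands in \textbf{FAULTY 1} (resp.\ \textbf{FAULTY 2}), valid by Lemma~\ref{lem:FAULTY1} and Corollary~\ref{cor:FAULTY2}. If both robots are (or become) White and not gathered with no pending faulty target, it lands in \textbf{SYM}, valid by Corollary~\ref{cor:SYM}; and if the robots are co-located it lands in \textbf{GATHERED}. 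Since fair scheduling guarantees the robot carrying the corrupted pointer is activated within finite time, and each such activation strictly ``launders'' the pointer, after at most one activation per robot the system is in a valid subset; validity of those subsets then gives the required terminal behavior. The directed edges drawn in Figure~\ref{fig:ILLEGAL} are exactly these transitions, so the argument amounts to checking that every out-edge of the \textbf{ILLEGAL} graph points into a vertex already certified valid.

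The main obstacle I anticipate is the non-determinism of simultaneous activation: in the configurations highlighted in yellow in Figure~\ref{fig:table}, activating $A$ and $B$ together can resolve to two distinct successor configurations, so I must verify that \emph{both} resolutions land in valid subsets, not just one. A secondary subtlety is handling \textbf{ILLEGAL} vertices where \emph{both} robots have a corrupted pointer simultaneously (e.g.\ both mid-move toward inconsistent targets): here one activation only cleans one pointer, and I need to check that the intermediate configuration — one laundered robot, one still corrupted — is itself either already in a valid subset or is another \textbf{ILLEGAL} vertex whose remaining out-edges have been covered, so that no cycle among purely \textbf{ILLEGAL} vertices can persist under a fair scheduler. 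Once these finitely many cases are dispatched, \textbf{ILLEGAL} is valid and, together with Lemma~\ref{lem:GATHERED}, Lemma~\ref{lem:allvalid} follows.
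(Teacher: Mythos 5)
Your approach matches the paper's: the paper likewise argues that the \textbf{ILLEGAL} subset contains no internal cycles and that every exit edge leads into a subset already proved valid (the paper pins these down as \textbf{SYM} and \textbf{FAULTY 1}), then concludes validity by appeal to Corollary~\ref{cor:SYM} and Lemma~\ref{lem:FAULTY1}. Your version is a more detailed checklist of the same graph-inspection argument (and allows a slightly larger set of exit targets than the paper actually needs), so it is essentially the same proof.
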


\begin{proof}
The \textbf{ILLEGAL} subset covers configurations that cannot be reached except through transient errors. Proving its validity is necessary to ensure self-stabilization. 
We can quickly notice that no cycle exists within the \textbf{ILLEGAL} subset, and that this subset necessarily exits to either \textbf{SYM} or \textbf{FAULTY 1}.
As those subsets are valid (Corrolary~\ref{cor:SYM} and Lemma~\ref{lem:FAULTY1}), the \textbf{ILLEGAL} subset is also valid.
\end{proof}

\begin{Lemma}
\label{lem:GATHERED}
The \textbf{GATHERED} subset is valid.
\end{Lemma}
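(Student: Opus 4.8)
The plan is to analyze the \textbf{GATHERED} subset in the same style as the other subsets: enumerate its configurations, trace every edge (activation of $A$, of $B$, or of both), and verify that every execution keeps the two robots co-located forever. By the \textbf{GATHERED} labelling, every configuration in this subset has both robots on the same point, so each such configuration is \emph{terminal} in the sense of the Definition, and validity requires only clause \emph{(ii)}: that the robots ``eventually remain in a gathered configuration forever.''

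First I would identify what the algorithm does when a robot performs a LOOK in a terminal configuration. By Algorithm~\ref{alg1}, a White robot that sees the other robot on its own position executes ``do nothing'' (no move, no color change), so it stays White and stationary. A Black robot that sees a Black robot switches to White but does not move; a Black robot that sees a White robot does nothing. Crucially, \emph{no branch of the algorithm ever produces a nonzero destination when the two robots coincide at the moment of the LOOK}. Hence once a robot takes a snapshot in a gathered configuration, its only pending effect can be a color change, never a separation. The one subtlety is a robot that entered its COMPUTE or MOVE phase with a stale snapshot taken \emph{before} gathering — but such a configuration is, by construction, not in \textbf{GATHERED}; it is one of the configurations placed in \textbf{SYM}, \textbf{FAULTY 1}, etc., whose terminal-behavior analysis was already carried out in Lemmas~\ref{lem:SYM}, \ref{lem:FAULTY1}, and \ref{lem:ASYM}. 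So within \textbf{GATHERED} proper, every pending move is a zero move and every robot's snapshot reflects the gathered state.

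Next I would walk the edges of Figure~\ref{fig:GATHERED}. The reachable terminal states are combinations drawn from $\{\mathbf{W}, \mathbf{C2W}, \mathbf{C2B}, \mathbf{M2O}\text{ with null target}, \mathbf{C2N}\}$ for each robot (the COMPUTE/MOVE phases that, when fed a gathered snapshot, do nothing or merely recolor). I would check: (a) from $(\mathbf{W},\mathbf{W})$ with both White, any activation leaves both White and co-located, so the configuration is a fixed point; (b) if a robot is Black, the only transitions either recolor it to White (when the other is Black) or leave it unchanged (when the other is White), and in finitely many activations — guaranteed by fairness — both robots reach White and the system falls into the absorbing $(\mathbf{W},\mathbf{W})$ state; (c) throughout, no edge of the subset decreases or increases the inter-robot distance, which stays $0$. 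Since there are finitely many terminal states, no cycle among them can avoid the all-White sink under a fair scheduler, which establishes clause \emph{(ii)}.

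The main obstacle — really the only place care is needed — is the bookkeeping of \emph{pending} COMPUTE/MOVE phases: making sure that every terminal configuration listed in \textbf{GATHERED} genuinely has the property that its outstanding computation was based on a gathered snapshot (so it cannot re-separate the robots), and that configurations with a pre-gathering stale snapshot were correctly assigned to the other six subsets and handled there. Once that accounting is confirmed against Figure~\ref{fig:table} and Figure~\ref{fig:GATHERED}, the rest is the routine edge-by-edge verification sketched above, and we conclude that \textbf{GATHERED} is valid. Combined with Lemmas~\ref{lem:ASYM}, \ref{lem:FAULTY1}, \ref{lem:ILLEGAL} and Corollaries~\ref{cor:FAULTY2}, \ref{cor:SYM}, this completes the proof of Lemma~\ref{lem:allvalid}, and hence of Theorem~\ref{thm:gathering}.
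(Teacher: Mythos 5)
Your proposal is correct in substance, but it takes a noticeably different route from the paper. The paper's own proof of this lemma is a one-line delegation: it observes that the \textbf{GATHERED} configurations are exactly the terminal configurations reached from the other subsets, whose terminal behaviors were already verified inside Lemmas~\ref{lem:ASYM}, \ref{lem:FAULTY1}, \ref{lem:SYM} and the corollaries, so nothing new needs to be checked. You instead give a direct, self-contained verification of the subset: you argue that no branch of Algorithm~\ref{alg1} produces a nonzero destination when the LOOK snapshot shows the robots co-located, that pending actions based on pre-gathering snapshots belong by construction to the other subsets (where their terminal behavior was handled), and you then walk the edges of the \textbf{GATHERED} graph. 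Your approach buys independence from the bookkeeping of the other lemmas and makes explicit the key invariant (a gathered snapshot can only trigger a recoloring, never a move), at the cost of redoing an enumeration the paper's decomposition was designed to avoid. One small inaccuracy in your step (b): the system need not reach the all-White state --- if the gathered configuration has one White and one Black robot, the Black robot sees White and never recolors, so a mixed-color gathered configuration is itself absorbing. This does not harm the argument, since clause \emph{(ii)} of validity only requires the robots to remain co-located forever, which your distance-stays-zero observation already establishes.
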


\begin{proof}
This \textbf{GATHERED} subset corresponds to the terminal configuration of all other subsets, which have been proved to be valid. 
\end{proof}

\section{Conclusions}

In this paper we presented a new algorithm to solve gathering of two robots in a self-stabilizing fashion for the non-rigid ASYNC model with an optimal number of lights. Despite being simple to describe, its proof (and the possibility to start from any possible configuration in an asynchronous execution model) mandated significant effort and systematic consideration of all possible cases. 

A natural open question is the possibility of self-stabilizing gathering with an arbitrary number of robots (in particular, a solution that starts from a bivalent configuration with two piles of $\frac{n}{2}$ robots would be of interest), and the minimal number of colors that is required in case the problem is solvable. However, we expect the complexity of the proof to go beyond what is tractable by a human, and would like to consider the possibility of using formal methods. Currently, model-checking~\cite{devismes12sss,berard16dc,doan16sofl,sangnier17fmcad} and program synthesis~\cite{bonnet14wssr,MPST14c} cannot scale to an arbitrary number of robots, and proof assistant techniques~\cite{auger13sss,courtieu15ipl,courtieu16disc,balabonski16sss} do not yet permit to reason about the ASYNC model. Most likely, solving self-stabilizing gathering with $n$ robots in ASYNC will require significant advances in mobile robot formalization.

\end{document}